\begin{document}

\title{Continuous-variable nonlocality and contextuality
}

\author{Rui Soares Barbosa}
\email{rui.soaresbarbosa@inl.int}
\orcid{0000-0002-0465-8518}
\affiliation{INL -- International Iberian Nanotechnology Laboratory, Braga, Portugal}
\author{Tom Douce}
\email{doucetom@hotmail.fr}
\affiliation{School of Informatics, University of Edinburgh, Edinburgh, United Kingdom}
\author{Pierre-Emmanuel Emeriau}
\email{pierre-emmanuel.emeriau@lip6.fr}
\orcid{0000-0001-5155-1783}
\affiliation{Sorbonne Universit\'e, CNRS, LIP6, F-75005 Paris, France}
\author{Elham Kashefi}
\email{elham.kashefi@lip6.fr}
\affiliation{School of Informatics, University of Edinburgh, Edinburgh, United Kingdom}
\affiliation{Laboratoire d'Informatique de Paris 6, CNRS and Sorbonne Universit\'e, Paris, France}
\author{Shane Mansfield}
\email{shane.mansfield@quandela.com}
\affiliation{Quandela, 7 Rue L\'eonard de Vinci, Massy, France}
\thanks{\linebreak \linebreak \scriptsize The list of authors is ordered alphabetically.}

\maketitle

\hypersetup{urlcolor=magenta}

\begin{abstract}
Contextuality is a non-classical behaviour that can be exhibited by quantum systems.
It is increasingly studied for its relationship to quantum-over-classical advantages in informatic tasks.
To date, it has largely been studied in discrete-variable scenarios, where observables take values in discrete and usually finite sets. Practically, on the other hand, continuous-variable scenarios offer some of the most promising candidates for implementing quantum computations and informatic protocols.
Here we set out a framework for treating contextuality in continuous-variable scenarios.
It is shown that the Fine--Abramsky--Brandenburger theorem extends to this setting, an important consequence of which is that Bell nonlocality can be viewed as a special case of contextuality, as in the discrete case.
The contextual fraction, a quantifiable measure of contextuality that bears a precise relationship to Bell inequality violations and quantum advantages, is also defined in this setting.
It is shown to be a non-increasing monotone with respect to classical operations that include binning to discretise data.
Finally, we consider how the contextual fraction can be formulated as an infinite linear program. Through Lasserre relaxations, we are able to express this infinite linear program as a hierarchy of semi-definite programs that allow to calculate the contextual fraction with increasing accuracy.
\keywords{Contextuality \and Bell nonlocality \and Measure theory \and Continuous variables \and Linear programming \and Semi-definite programming}
\end{abstract}

    \section*{Introduction}\label{intro}
        Contextuality is one of the principal markers of non-classical behaviour that can be exhibited by quantum systems.
The Heisenberg uncertainty principle identified that certain pairs of quantum observables are incompatible, e.g.~position and momentum.
In operational terms, observing one will disturb the outcome statistics of the other.
This is sometimes cited as evidence that not all observables can simultaneously be assigned definite values.
Taking the mathematical formalism of quantum mechanics at face value, that is indeed the case, in stark contrast with classical physical theories.
However, one may wonder whether it is possible to build a (presumably more fundamental) theory more in accordance with our classical intuitions, but which still \emph{matches} the empirical predictions of quantum mechanics.
Put briefly, the fundamental question is then
whether such quantum oddities
are a necessary property of any theory that accurately describes nature, and thus have \emph{empirical content},
or mere artifices of the mathematical formalism of quantum theory.

This question might be answered by attempting to build a hidden-variable model
reproducing quantum-mechanical empirical predictions but with the further assumption that it be \textit{noncontextual}.
Roughly speaking, the latter imposes that the model must respect the basic assumptions that
\begin{enumerate*}[label=(\roman*)]
\item
hidden variables assign definite values to all the observable properties,
and
\item
jointly performing \stress{compatible} observables does not disturb the hidden variable.
\end{enumerate*}
That these apparently simple assumptions are at odds with the empirical predictions of quantum mechanics is the content of the seminal theorems by Bell \cite{bell1966} and by Kochen \& Specker \cite{ks}.

Separately to its foundational importance, contextuality also has a more practical significance.
A major application of quantum theory today is in quantum information and computation.
There, one is primarily interested in what can be \emph{done} with quantum systems and is beyond the capabilities of any classical implementation.
So one is interested in the properties of the \emph{correlations} realisable by quantum systems when compared to the kind of correlations that could arise from any classical theory.
In this sense, aside from whatever foundational or physical significance one may wish (or not) to ascribe to contextuality,
it has an undeniable practical significance in relation to quantum information and computation.
In particular, it has now been identified as the essential ingredient for enabling a range of quantum-over-classical advantages in informatic tasks, which include the onset of universal quantum computing in certain computational models \cite{raussendorf2013contextuality,howard2014contextuality,abramsky2017contextual,bermejo2017contextuality,abramsky2017quantum}.

It is notable that to date the study of contextuality has largely focused on discrete variable scenarios and that the main frameworks and approaches to contextuality are tailored to modelling these, e.g.~\cite{ab,csw,afls,dzhafarov2015contextuality}.
In such scenarios, observables can only take values in discrete, and usually finite, sets.
Discrete variable scenarios typically arise in finite-dimensional quantum mechanics, \eg when dealing with quantum registers in the form of systems of multiple qubits as is common in quantum information and computation theory.

Yet, from a practical perspective, continuous-variable quantum systems are emerging
as some of the most promising candidates for implementing
quantum informational and computational tasks \cite{braunstein2005quantum,weedbrook2012gaussian}.
The main reason for this is that they offer
unrivalled possibilities for deterministic generation
of large-scale resource states \cite{yoshikawa2016invited}
and for highly-efficient measurements of certain observables.
Together these cover many of the basic operations required in the one-way or
measurement-based model of quantum computing \cite{raussendorf2001one}, for example.
Typical implementations are in optical systems where the continuous variables
correspond to the position-like and momentum-like quadratures of
the quantised modes of an electromagnetic field.
Indeed position and momentum,
as mentioned previously in relation to the uncertainty principle,
are the prototypical examples of continuous variables in quantum mechanics.

Since quantum mechanics itself is infinite dimensional,
it also makes sense from a foundational perspective to extend
analyses of the key concept of contextuality to the continuous-variable setting.
Furthermore, continuous variables can be useful when dealing with iteration,
even when attention is restricted to finite-variable actions at discrete time steps,
as is traditional in informatics.
An interesting question, for example, is whether contextuality arises and is of interest in such situations
as the infinite behaviour of quantum random walks.

The main contributions of this article are the following:
\begin{itemize}
\item
We present a robust framework for contextuality in continuous-variable scenarios that follows along the lines of the discrete-variable framework introduced by Abramsky and Brandenburger \cite{ab} (Section~\ref{sec:formalism}). 
We thus generalise this framework to deal with outcomes being valued on general measurable spaces, as well as to arbitrary (infinite) sets of measurement labels.

\item
We show that the Fine--Abramsky--Brandenburger theorem \cite{fine1982hidden,ab} extends to continuous variables (Section~\ref{sec:fine}).
This establishes that noncontextuality of an empirical behaviour,
originally characterised by the existence of a deterministic hidden-variable model \cite{bell1966,ks}, can equivalently be characterised by the existence of a factorisable hidden-variable model, and that ultimately both of these are subsumed by a canonical form of hidden-variable model -- a \stress{global section} in the sheaf-theoretic perspective.
An important consequence is that Bell nonlocality may be viewed as a special case of contextuality in continuous-variable scenarios just as for discrete-variable scenarios.
\item
The contextual fraction, a quantifiable measure of contextuality that bears a precise relationship to Bell inequality violations and quantum advantages \cite{abramsky2017contextual}, can also be defined in this setting using infinite linear programming (Section~\ref{sec:quantifying}).
It is shown to be a non-increasing monotone with respect to the free operations of a resource theory for contextuality \cite{abramsky2017contextual,abramsky2019comonadic}.
Crucially, these include the common operation of binning to discretise data.
A consequence is that any witness of contextuality on discretised empirical data also witnesses and gives a lower bound on genuine continuous-variable contextuality.
\item
While the infinite linear programs are of theoretical importance
and capture exactly the quantity and Bell-like inequalities in which we are interested,
they are not directly useful for actual numerical computations.
To get around this limitation, we introduce a hierarchy of semi-definite
programs which are relaxations of the original problem and whose values
converge monotonically to the contextual fraction (Section~\ref{sec:sdp}). This applies in the restricted setting where there is a finite set of measurement labels.
\end{itemize}

\paragraph{Related work.}
Note that we are specifically interested in scenarios involving observables with continuous spectra, or in more operational language, measurements with continuous outcome spaces.
We still consider scenarios featuring only discrete sets of observables or measurements, as is typical in continuous-variable quantum computing.
The possibility of considering contextuality in settings with continuous measurement spaces has also been evoked in \cite{cunha2019measures}.
We also note that several prior works have explicitly considered contextuality in continuous-variable systems \cite{plastino2010state,he2010bell,mckeown2011testing,su2012quantum,asadian2015contextuality,laversanne2017general,ketterer2018continuous}.
Our approach is distinct from these in that it provides a genuinely continuous-variable treatment of contextuality itself as opposed to embedding discrete variable contextuality arguments into, or extracting them from, continuous-variable systems.

    \section{Continuous-variable behaviours}\label{sec:cvbehaviours}
        In this section we provide a brief motivational example for the kind of continuous-variable empirical behaviour we are interested in analysing. 
The approach applies generally to any hypothetical empirical data, including those that do not admit a quantum realisation (e.g.\ the PR box from Ref~\cite{popescu1994quantum}). But also, in particular, it does of course apply to empirical data arising from quantum mechanics, in that the statistics arise from a state and measurements on a quantum system according to the Born rule.
Indeed, quantum theory provides the main motivation for this study and more broadly for the sheaf-theoretic approach, because of a feature that may arise in empirical models having quantum but not classical realisations: which we refer to as \emph{contextuality}.

Suppose that we can interact with a system by performing measurements on it and observing their outcomes.
A feature of quantum systems is that not all observables commute, so that certain combinations of measurements are incompatible.

At best, we can obtain empirical observational data for contexts in which only compatible measurements are performed, which can be collected by running the experiment repeatedly.
As we shall make more precise in  Sections~\ref{sec:formalism} and \ref{sec:fine}, contextuality arises when the empirical data obtained is inconsistent with the assumption that for each run of the experiment the system has a global and context-independent assignment of values to all of its observable properties.

To take an operational perspective, a typical example of an experimental setup or scenario that we consider is the one depicted in Figure~\ref{fig:scenario} [left].
In this scenario, a system is prepared in some fixed bipartite state, following which parties $A$ and $B$ may each choose between two measurement settings, $m_A \in \enset{ a, a'}$ for $A$ and $m_B \in \enset{ b, b'}$ for $B$.
We assume that outcomes of each measurement live in $\bfR$, which typically will be a bounded measurable subspace of the real numbers $\R$ (with its Borel $\sigma$-algebra).
Depending on which choices of inputs were made, the empirical data might for example be distributed according to one of the four hypothetical probability density plots in $\bfR^2$ depicted in Figure~\ref{fig:scenario} [right].
This scenario and hypothetical empirical behaviour has been considered elsewhere \cite{ketterer2018continuous} as a continuous-variable version of the Popescu--Rohrlich (PR) box \cite{popescu1994quantum}.

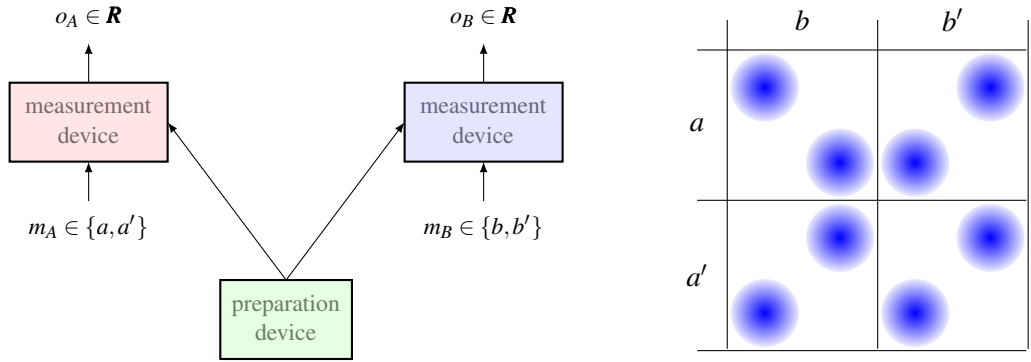
\begin{figure}[tbhp]
\centering
{\footnotesize
    \begin{tikzpicture}[scale=3]
  \path (0.13,0) coordinate (A1);
  \path (0.13,0.3) coordinate (A2);
  \path (0.87,0.3) coordinate (A3);
  \path (0.87,0) coordinate (A4);
  \path (0.5,-0.15) coordinate (B1);
  \path (0.5,0) coordinate (B2);
  \path (0.5,0.3) coordinate (B3);
  \path (0.5,0.45) coordinate (B4);
  \path (0.5,0.15) node {\begin{tabular}{c} measurement \\ device \end{tabular}};
  \path (0.5,-0.25) node {$m_A \in \{a,a'\}$};
  \path (0.5,0.55) node {$o_A \in \bfR$};
  \begin{scope}[ thick,black]
    \filldraw[fill=red!20,fill opacity=0.5] (A1) -- (A2) -- (A3) -- (A4) -- cycle;
  \end{scope} 
  \draw[->] (B1) to (B2);  
  \draw[->] (B3) to (B4);
    
  \path (1.63,0) coordinate (a1);
  \path (1.63,0.3) coordinate (a2);
  \path (2.37,0.3) coordinate (a3);
  \path (2.37,0) coordinate (a4);
  \path (2.,-0.15) coordinate (b1);
  \path (2.,0) coordinate (b2);
  \path (2.,0.3) coordinate (b3);
  \path (2.,0.45) coordinate (b4);
  \path (2.,0.15) node {\begin{tabular}{c} measurement \\ device \end{tabular}};
  \path (2.,-0.25) node {$m_B \in \{b,b'\}$};
  \path (2.,0.55) node {$o_B \in \bfR$};
  \begin{scope}[ thick,black]
    \filldraw[fill=blue!20,fill opacity=0.5] (a1) -- (a2) -- (a3) -- (a4) -- cycle;
  \end{scope} 
  \draw[->] (b1) to (b2);  
  \draw[->] (b3) to (b4);
  
  \path (.93,-.75) coordinate (s1);
  \path (.93,-.45) coordinate (s2);
  \path (1.57,-.45) coordinate (s3);
  \path (1.57,-.75) coordinate (s4);
  \path (1.25,-.6) node {\begin{tabular}{c} preparation \\ device \end{tabular}};
  \begin{scope}[ thick,black]
    \filldraw[fill=green!20,fill opacity=0.5] (s1) -- (s2) -- (s3) -- (s4) -- cycle;
  \end{scope} 
  
  \path (1.25,-.45) coordinate (p);
  \path (0.87,0.15) coordinate (ma);
  \path (1.63,0.15) coordinate (mb);
  
  \draw[->] (p) to (ma);
  \draw[->] (p) to (mb);
  
\end{tikzpicture}
    }
    \quad\quad\quad
\begin{tikzpicture}[scale=.3]

\def\rad{1.15}

\node [inner sep=.0pt] (a11) at (0,0) {};
\node [inner sep=.0pt] (a12) at (5,0) {};
\node [inner sep=.0pt] (a13) at (10,0) {};
\node [inner sep=.0pt] (a21) at (0,5) {};
\node [inner sep=.0pt] (a22) at (5,5) {};
\node [inner sep=.0pt] (a23) at (10,5) {};
\node [inner sep=.0pt] (a31) at (0,10) {};
\node [inner sep=.0pt] (a32) at (5,10) {};
\node [inner sep=.0pt] (a33) at (10,10) {};

\node (left) at (-1,0) {};
\node (up) at (0,1) {};

\node (A') at ($.5*(a11)+.5*(a21)+(left)$) {$a'$};
\node (A) at ($.5*(a21)+.5*(a31)+(left)$) {$a$};
\node (B) at ($.5*(a31)+.5*(a32)+(up)$) {$b$};
\node (B') at ($.5*(a32)+.5*(a33)+(up)$) {$b'$};

\draw ($(a11)+(left)$) -- (a13);
\draw ($(a21)+(left)$) -- (a23);
\draw ($(a31)+(left)$) -- (a33);

\draw ($(a31)+(up)$) -- (a11);
\draw ($(a32)+(up)$) -- (a12);
\draw ($(a33)+(up)$) -- (a13);

\filldraw[white,inner color=blue,outer color=blue!5] ($.75*(a11)+.25*(a22)$) circle (\rad);
\filldraw[white,inner color=blue,outer color=blue!5] ($.25*(a11)+.75*(a22)$) circle (\rad);

\filldraw[white,inner color=blue,outer color=blue!5] ($.75*(a12)+.25*(a23)$) circle (\rad);
\filldraw[white,inner color=blue,outer color=blue!5] ($.25*(a12)+.75*(a23)$) circle (\rad);

\filldraw[white,inner color=blue,outer color=blue!5] ($.75*(a22)+.25*(a33)$) circle (\rad);
\filldraw[white,inner color=blue,outer color=blue!5] ($.25*(a22)+.75*(a33)$) circle (\rad);

\filldraw[white,inner color=blue,outer color=blue!5] ($.75*(a22)+.25*(a31)$) circle (\rad);
\filldraw[white,inner color=blue,outer color=blue!5] ($.25*(a22)+.75*(a31)$) circle (\rad);

\end{tikzpicture}
\label{fig:scenario}
\caption{
[Left] operational depiction of a typical bipartite experimental scenario. [Right] Hypothetical probability density plots for empirical data arising from such an experiment. Cf.~the discrete-variable probability tables of \cite{mansfield2012hardy,mansfield2017consequences}.
}
\end{figure}

    \section{Preliminaries on measures and probability}\label{sec:preliminaries}
        In order to properly treat probability on continuous-variable spaces, it is necessary to introduce a modicum of measure theory.
This section serves to recall some basic ideas and to fix notation.
The reader may choose to skip the section
and consult it as reference for the remainder of the article.

A \emph{measurable space} is a pair $\bfX = \tuple{X,\Fc}$ consisting of a set $X$ and a $\sigma$-algebra (or $\sigma$-field) $\Fc$ on $X$,
\ie a family of subsets of $X$
containing the empty set and closed under complementation and countable unions.
In some sense, this specifies the subsets of $X$ that can be assigned a `size', and which are therefore called the \emph{measurable sets} of $\bfX$.
Throughout this paper, we follow the convention of using boldface to denote the measurable space and the same symbol in regular face for its underlying set.

A trivial example of a $\sigma$-algebra over any set $X$ is its powerset $\ps(X)$, which gives the discrete measurable space $\tuple{X,\ps(X)}$, where every set is measurable.
This is typically used when $X$ is countable (finite or countably infinite), in which case this discrete $\sigma$-algebra 
is generated by the singletons.
Another example, of central importance in measure theory, is $\tuple{\R,\Bc_\R}$, where $\Bc_\R$ is the $\sigma$-algebra generated from the open sets of $\R$, whose elements are called the Borel sets.
Working with Borel sets avoids the problems that would arise if we naively attempted to measure or assign probabilities to points in the continuum.
More generally, any topological space gives rise to a Borel measurable space in this fashion.

A \emph{measurable function} between measurable spaces $\bfX = \tuple{X,\Fc_X}$
and $\bfY = \tuple{Y, \Fc_Y}$
is a function $\fdec{f}{X}{Y}$ between the underlying sets whose preimage preserves measurable sets, \ie such that, for any $E \in \Fc_Y$,
${f^{-1}(E) \in \Fc_X}$.
This is analogous to the definition of a continuous function between topological spaces.
Clearly, the identity function is measurable and measurable functions compose. We will denote by $\catMeas$ the category whose objects are measurable spaces 
and whose morphisms are measurable functions.

The \emph{product} of two measurable spaces $\bfX_1 = \tuple{X_1,\Fc_1}$ and $\bfX_2 = \tuple{X_2,\Fc_2}$ is
the measurable space
\begin{equation*}
\bfX_1 \times \bfX_2 = \tuple{X_1 \times X_2, \Fc_1 \otimes \Fc_2} \Mcomma
\end{equation*}
where the Cartesian product of the underlying sets, $X_1 \times X_2$, is equipped with the so-called tensor product $\sigma$-algebra $\Fc_1 \otimes \Fc_2$, which is the $\sigma$-algebra generated by the `rectangles', subsets of the form $E_1 \times E_2$ with $E_1 \in \Fc_1$ and $E_2 \in \Fc_2$.
This is the categorical (binary) product in $\catMeas$.

We shall also need to deal with \emph{infinite products} of measurable spaces.
The generalisation is analogous to that for products of topological spaces, where the box topology (generated by `rectangles') is no longer the most natural choice when dealing with infinite families, but rather the topology generated by `cylinders'.
Let $I$ be an arbitrary index set. 
The product of measurable spaces $(\bfX_i = \tuple{X_i,\Fc_i})_{i \in I}$ is the measurable space
\begin{equation}
\prod_{i \in I} \bfX_i = \tuple{ \prod_{i \in I} X_i, \bigotimes_{i \in I} \Fc_i}  \Mcomma
\label{eq:infiniteproduct}
\end{equation}
where $X_I = \prod_{i \in I} X_i$ is the Cartesian product of the underlying sets, and $\Fc_I = \bigotimes_{i \in I} \Fc_i$ is the $\sigma$-algebra generated by subsets of $\prod_{i \in I} X_i$ of the form $\prod_{i \in I} E_i$ where $E_i \subseteq X_i$ for all $i \in I$ 
and $E_i \neq X_i$ for only finitely many $i \in I$.
This is the smallest $\sigma$-algebra that makes the projection maps $\fdec{\pi_k}{\prod_{i \in I} X_i }{X_k}$ measurable. It therefore corresponds to the categorical (arbitrary) product in $\catMeas$.

A \emph{measure} on a measurable space $\bfX = \tuple{X,\Fc}$ is a function $\fdec{\mu}{\Fc}{\Rext}$
from the $\sigma$-algebra to the extended real numbers $\Rext = \R \cup \enset{-\infty,+\infty}$
satisfying:
\begin{enumerate}[label=(\roman*)]
\item\label{cond:nonneg}

[nonnegativity]
$\mu(E)\geq 0$ for all $E\in\Fc$;
\item\label{cond:null}

[null empty set]
$\mu(\emptyset)=0$;
\item\label{cond:sigmaadd}

[$\sigma$-additivity]
for any countable family $\family{E_i}_{i=1}^\infty$ of pairwise disjoint measurable sets, it holds that
$\mu(\bigcup_{i=1}^\infty E_i) = \sum_{i=1}^\infty \mu(E_i)$.
\end{enumerate}

A measure on $\bfX$ allows one to \emph{integrate} well-behaved\footnotemark\ measurable functions $\fdec{f}{\bfX}{\tuple{\R,\Bc_\R}}$ to obtain a real value,
denoted $\intg{\bfX}{f}{\mu}$ or $\intg{x\in\bfX}{f(x)}{\mu(x)}$.
The simplest example of such a measurable function is the \emph{indicator function} of a measurable set $E \in \Fc$:
\[\chi_{_E}(x) \defeq \begin{cases} 1 & \text{if $x \in E$} \\ 0 & \text{if $x \not\in E$.}\end{cases}\]
For any measure $\mu$ on $\bfX$, its integral yields \begin{equation}\label{eq:integralindicator}\intg{\bfX}{\chi_{_E}}{\mu} = \mu(E) \Mdot \end{equation}

A measure $\mu$ is finite if $\mu(X)<\infty$ and in particular it is a \emph{probability measure} if $\mu(X)=1$.
We will denote by $\Measures(\bfX)$ and $\PMeasures(\bfX)$, respectively, the sets of measures and probability measures on the measurable space $\bfX$.

\footnotetext{For a comprehensive treatment, we refer the reader to \eg \cite{Bill79}, or to \cite{panangaden2009labelled} for a beautiful and more concise introduction aimed particularly at computer scientists.}

A measurable function $\fdec{f}{\bfX}{\bfY}$
carries any measure $\mu$ on $\bfX$ to
a measure $f_*\mu$ on $\bfY$. This \emph{push-forward} measure is given by
$f_*\mu(E) = \mu(f^{-1}(E))$ for any set $E$ measurable in $\bfY$.
An important use of push-forward measures is that for any integrable function $\fdec{g}{\bfY}{\tuple{\R,\Bc_\R}}$,
it allows us to write the following change-of-variables formula
\begin{equation}\label{eq:changeofvariables}
\intg{\bfY}{g}{f_*\mu} = \intg{\bfX}{g \circ f}{\mu}\Mdot
\end{equation}
The push-forward operation preserves the total measure, hence it takes $\PMeasures(\bfX)$ to $\PMeasures(\bfY)$.

A case that will be of particular interest to us
is the push-forward of a measure $\mu$ on a product space $\bfX_1 \times \bfX_2$ along a projection 
$\fdec{\pi_i}{\bfX_1 \times \bfX_2}{\bfX_i}$: this yields
the \emph{marginal measure} $\mu|_{\bfX_i}={\pi_i}_*\mu$,
where e.g. for $E$ measurable in $\bfX_1$, 
$\mu|_{\bfX_1}(E) = \mu(\pi_1^{-1}(E)) = \mu(E \times X_2)$.

In the opposite direction, given measures $\mu_1$ on $\bfX_1$ and $\mu_2$ on  $\bfX_2$,
a \emph{product measure} $\mu_1 \times \mu_2$ is a measure on the product measurable space $\bfX_1 \times \bfX_2$ satisfying
$(\mu_1 \times \mu_2)(E_1 \times E_2) = \mu_1(E_1)\mu_2(E_2)$
for all $E_1 \in \Fc_1$ and $E_2 \in \Fc_2$.
For probability measures, there is a uniquely determined product measure.\footnotemark\
The analogous, much more general statement also holds for arbitrary products of probability measures (see \eg \cite[section 11.2]{vestrup2003measures}).

\footnotetext{In fact, this holds more generally for $\sigma$-finite measures,
\ie when the space is a countable union of sets of finite measure.
}

We can view $\Measures$ as a map that takes a measurable space to the set of measures on that space,
and similarly for $\PMeasures$.
These become functors $\catMeas \lto \catSet$ if we define the action on morphisms to be the push-forward operation. Explicitly we set $\fdecdef{\Measures(f) \defeq f_*}{\Measures(\bfX)}{\Measures(\bfY)}{\mu}{f_*\mu}$, where $\fdec{f}{\bfX}{\bfY}$ is a measurable function, and similarly for $\PMeasures$.

Remarkably, the set $\PMeasures(\bfX)$ of probability measures on $\bfX$ can itself be made into a measurable space by equipping it with the least $\sigma$-algebra that makes the evaluation functions 
\begin{equation*} 
\fdecdef{\mathsf{ev}_E}{\PMeasures(\bfX)}{[0,1]}{\mu}{\mu(E)} 
\end{equation*}
measurable for all $E \in \Fc_X$.\footnotemark
\footnotetext{More concretely, it is the $\sigma$-algebra generated by the sets $\mathsf{ev}_E^{-1}([0,r))=\setdef{\mu \in \PMeasures(X)}{\mu(E) < r}$ with  $E \in \Fc_X$ and $r \in [0,1]$.}
This yields an endofunctor
$\fdec{\PMeasures}{\catMeas}{\catMeas}$,
which moreover has the structure of a monad, called the Giry monad \cite{giry1982categorical}.
The unit of this monad is given by 
\begin{equation*}
\fdecdef{\eta_{\bfX}}{\bfX}{\PMeasures(\bfX)}{x}{\delta_x}
\end{equation*}
where $\delta_x$ is the \emph{Dirac measure}, or point mass, at $x$
given by $\delta_x(E) \defeq  \chi_{_E}(x)$.
Multiplication of the monad is given by
\begin{equation*}
\fdec{\mu_\bfX}{\PMeasures(\PMeasures(\bfX))}{\PMeasures(\bfX)}
\end{equation*}
which takes a probability measure $P$ on $\PMeasures(\bfX)$ to its `average', 
a probability measure $\mu_\bfX(P)$ on $\bfX$, $\fdec{\mu_\bfX(P)}{\Fc_X}{[0,1]}$,  whose value on a measurable set $E \in \Fc_X$ is given by $\mu_\bfX(P)(E) \defeq \intg{\PMeasures(\bfX)}{\mathsf{ev}_E}{P}$.

The Kleisli category of this monad is the category of Markov kernels, which represent continuous-variable probabilistic maps and generalise the discrete notion of stochastic matrix.
Concretely, a \emph{Markov kernel} between measurable spaces $\bfX = \tuple{X,\Fc_X}$
and $\bfY = \tuple{Y, \Fc_Y}$ is a function
$\fdec{k}{X \times \Fc_Y}{[0,1]}$
such that:
\begin{enumerate}[label=(\roman*)]
    \item for all $E \in \Fc_Y$, $\fdec{k(\dummy,E)}{X}{[0,1]}$ is a measurable function;\footnote{The space $[0,1]$ is assumed to be equipped with its Borel $\sigma$-algebra.}
    \item for all $x \in X$, $\fdec{k(x,\dummy)}{\Fc_Y}{[0,1]}$ is a probability measure.
\end{enumerate}

    \section{Framework}\label{sec:formalism}
        In this section, we follow closely the discrete-variable framework of \cite{ab} in more formally describing the kinds of experimental scenarios in which we are interested and the empirical behaviours that arise on these, although some extra care is required for dealing with continuous variables.

\subsection{Measurement scenarios}

\begin{definition}
A \emph{measurement scenario} is a triple $\XMOO$ whose elements are specified as follows.
\begin{itemize}
    \item  $X$ is a (possibly  infinite) set of \emph{measurement labels}.
    
    \item $\Mc$ is a covering family of subsets of $X$, \ie such that $\bigcup\Mc = X$.
    The elements $C \in \Mc$ are called \emph{maximal contexts} and represent maximal sets of compatible observables. We therefore require that $\Mc$ be an anti-chain with respect to subset inclusion, \ie that no element of this family is a proper subset of another.
    
    Any subset of a maximal context also represents a set of compatible measurements, and we refer to elements of ${\Uc \defeq \setdef{U \subseteq C}{C \in \Mc}}$ as \emph{contexts}.\footnotemark
    
    \footnotetext{While it is more convenient to specify $\Mc$, note that the set of contexts $\Uc$ carries exactly the same information. It forms an abstract simplicial complex whose simplices are the contexts and whose facets are the maximal context. This combinatorial topological structure is emphasised in some presentations \cite{barbosa2014monogamy,barbosa2015contextuality,caru2018towards,karvonen2018categories,abramsky2019comonadic}.}
    
    \item
    $\bfO = \family{\bfO_x}_{x \in X}$ specifies a measurable \emph{space of outcomes} $\bfO_x = \tuple{O_x,\Fc_x}$ for each measurement $x \in X$.
    
    \end{itemize}
\end{definition}

Measurement scenarios can be understood as providing a concise description of the kind of experimental setup that is being considered.
For example, the setup represented in Figure~\ref{fig:scenario} is described by the measurement scenario:
\begin{equation}\label{eq:scenario}
X=\enset{a,a',b,b'} \, , \quad\quad \Mc = \enset{ \, \enset{a,b}, \, \enset{a,b'}, \, \enset{a',b}, \, \enset{a',b'} \, } \, , \quad\quad \bfO_x = \bfR \, ,
\end{equation}
where $\bfR$ is a bounded measurable subspace of $\tuple{\R,\Bc_\R}$.

If some set of measurements $U \subseteq X$ is considered together, there is a joint outcome space given by the product of the respective outcome spaces 
(see Eq.~\eqref{eq:infiniteproduct}),

\begin{equation*}\label{eq:events}
\bfO_U \defeq \prod_{x \in U} \bfO_x = \tuple{O_U, \Fc_U} = \tuple{ \prod_{x \in U} O_x \; , \;  \bigotimes_{x \in U} \Fc_x }  \Mdot
\end{equation*}

The map $\Evs$ that maps $U \subseteq X$ to $\Evs(U) = \bfO_U$
is called the \emph{event sheaf} as concretely it assigns to any 
set of measurements information about the outcome events that could result from jointly performing them.
Note that as well as applying the map to valid contexts $U \in \Uc$ 
we will see that it can also be of interest to consider hypothetical outcome spaces
for sets of measurements that do not necessarily form valid contexts,
in particular $\Evs(X) = \bfO_X$, the joint outcome space for \textit{all} measurements.
Moreover, as we will briefly discuss,
this map satisfies the conditions to be a sheaf ${\fdec{\Evs}{\ps(X)^\op}{\catMeas}}$,
where $\ps(X)$ denotes the powerset of $X$, similarly to its discrete-variable analogue in 
\cite{ab}. 

\subsection{The language of sheaves}
Sheaves are widely used in modern mathematics.
They might roughly be thought of as providing a means of assigning information to the open sets of a topological space in such a way that information can be restricted to smaller open sets and consistent information on a family of open sets can be uniquely `glued' on their union\footnotemark.
In this work we are concerned with discrete topological spaces whose points represent measurements,
and the information that we are interested in assigning has to do with outcome spaces for these measurements and probability measures on these outcome spaces.
Sheaves can be defined concisely in category-theoretic terms as contravariant functors (presheaves) satisfying an additional gluing condition, though in what follows we will also give a more concrete description in terms of restriction maps.
Categorically, the event sheaf is a functor $\fdec{\Evs}{\ps(X)^\op}{\Meas}$ where $\ps(X)$ is viewed as a category in the standard way for partial orders, with morphisms corresponding to subset inclusions.

\footnotetext{For a comprehensive reference on sheaf theory see \eg \cite{maclane1992sheaves}.}

Sheaves come with a notion of \emph{restriction}.
In our example, restriction arises in the following way: whenever $U, V \in \ps(X)$ with $U \subseteq V$ we have an obvious restriction map $\fdec{\rho^V_U}{\Evs(V)}{\Evs(U)}$ which simply projects from the product outcome space for $V$ to that for $U$.
Note that $\rho^U_U$ is the identity map for any $U \in \ps(X)$ and that if $U \subseteq V \subseteq W$ in $\ps(X)$ then $\rho^V_U \circ \rho^W_V = \rho^W_U$.
Already this is enough to show that $\Evs$ is a \stress{presheaf}.
In categorical terms it establishes functoriality.
Our map assigns outcome spaces $\Evs(U) = \bfO_U$ to sets of measurements $U \in \ps(X)$, and in sheaf-theoretic terminology elements of these outcome spaces are called \emph{sections} over $U$.
Sections over $X$ are called \emph{global sections}.
For an inclusion $U \subseteq V$ and a section $\veco \in \Evs(V) = O_V$, it is often more convenient to use the notation $\veco|_U$ to denote $\rho^V_U (\veco) \in \Evs(U) = O_U$, the restriction of $\veco$ to $U$.

Additionally, the unique gluing property holds for $\Evs$.
Suppose that $\Nc \subseteq \ps(X)$ and we have an $\Nc$-indexed family of sections $\family{\veco_U \in O_U}_{U \in \Nc}$ that is compatible in the sense that its elements agree on overlaps, \ie that for all $U, V \in \Nc$, $\veco_U|_{U \cap V} = \veco_V|_{U \cap V}$. Then these sections can always be `glued' together in a unique fashion to obtain a section $\veco_{N}$ over $N \defeq \bigcup \Nc$ such that $\veco_N|_U = \veco_U$ for all $U \in \Nc$.
This makes $\Evs$ a \emph{sheaf}.

We will primarily be concerned with probability measures on outcome spaces.
For this, we recall that the Giry monad $\fdec{\PMeasures}{\catMeas}{\catMeas}$ takes a measurable space and returns the probability measures over that space.
Composing it with the event sheaf yields the map $\PMeasures \circ \Evs$ that takes any context and returns the probability measures on its joint outcome space.
In fact, this is a presheaf $\fdec{\PMeasures \circ \Evs}{\ps(X)^\op}{\catMeas}$,
where restriction on sections is given by marginalisation of probability measures.
Note that marginalisation simply corresponds to the push-forward of a measure along projections to a component of the product space, which are precisely the restriction maps of $\Evs$.
Note, however, that this presheaf does not satisfy the gluing condition and thus it crucially is not a sheaf.

\subsection{Empirical models}

\begin{definition}
An \emph{empirical model} on a measurement scenario $\XMOO$ is a compatible family for the presheaf $\PMeasures\circ\Evs$ on the cover $\Mc$.
Concretely, it is a family $e = \family{e_C}_{C \in \Mc}$, where $e_C$ is a probability measure on the space $\Evs(C)=\bfO_C$ for each maximal context $C \in \Mc$, which satisfies the compatibility condition:
\[e_C|_{C \cap C'} = e_{C'}|_{C \cap C'} \Mdot \]
\end{definition}

Empirical models capture in a precise way the probabilistic \emph{behaviours} that may arise upon performing measurements on physical systems.
The compatibility condition ensures that the empirical behaviour of a given measurement or compatible subset of measurements is independent of which other compatible measurements might be performed along with them.
This is sometimes referred to as the \emph{no-disturbance} condition.
A special case is \emph{no-signalling}, which applies in multi-party or Bell scenarios such as that of Figure~\ref{fig:scenario} and Eq.~\eqref{eq:scenario}.
In that case, contexts consist of measurements that are supposed to occur in  space-like separated locations, and compatibility ensures for instance that the choice of performing $a$ or $a'$ at the first location does not affect the empirical behaviour at the second location, \ie $e_{\enset{a,b}}|_{\enset{b}} = e_{\enset{a',b}}|_{\enset{b}}$.

Note also that while empirical models may arise from the predictions of quantum theory, their definition is theory-independent.
This means that empirical models can just as well describe hypothetical behaviours beyond what can be achieved by quantum mechanics such as the well-studied Popescu--Rohrlich box \cite{popescu1994quantum}.
This can be useful in probing the limits of quantum theory
and in singling out what distinguishes and characterises quantum theory within larger spaces of probabilistic theories,
both well-established lines of research in quantum foundations.

\paragraph{Sheaf-theoretically.}
An empirical model is a compatible family of sections for the presheaf $\PMeasures \circ \Evs$ indexed by the maximal contexts of the measurement scenario.
A natural question that may occur at this point is whether these sections can be glued to form a global section, and this is what we address next.

\subsection{Extendability and contextuality}

\begin{definition}\label{def:nc}
An empirical model $e$ on a scenario $\XMOO$ is \emph{extendable} (or \emph{noncontextual}\footnote{In the language of \cite{griffiths2019quantum}, (non)contextuality here -- and throughout this article -- refers to global (non)contextuality as opposed to Bell (non)contextuality.})
if there is a probability measure $\mu$ on the space $\Evs(X)=\bfO_X$
such that $\mu|_C = e_C$ for every $C \in \Mc$.\footnotemark
\end{definition}

\footnotetext{Notions of partial extendability have also been considered in the discrete setting in \cite{mansfield2014extendability,simmons2017maximally}.}

Recall that $\bfO_X$ is the global outcome space, whose elements correspond to global assignments of outcomes to all the measurements in the given scenario.
Of course, it is not always the case that $X$ is a valid context, and if it were then $\mu = e_X$ would trivially extend the empirical model.
The question of the existence of such a probability measure that recovers the context-wise empirical content of $e$ is particularly significant. When it exists, it amounts to a way of modelling the observed behaviour as arising stochastically from the behaviours of underlying states, identified with the elements of $O_X$, each of which deterministically assigns outcomes to \stress{all} the measurements in $X$ independently of the measurement context that is actually performed.
If an empirical model is not extendable it is said to be \emph{contextual}.
Furthermore, we will say that it is \emph{Bell nonlocal} in the special setting of so-called Bell scenarios, where the compatibility structure of observables is obtained from space-like separation.

\paragraph{Sheaf-theoretically.}
A contextual empirical model is a compatible family of sections for the presheaf $\PMeasures \circ \Evs$ over the contexts of the measurement scenario that cannot be glued into a global section.
Contextuality thus arises as the tension between local consistency and global inconsistency.

    \section{A FAB theorem}\label{sec:fine}
    Quantum theory presents a number of non-intuitive features.
For instance, Einstein, Podolsky and Rosen (EPR) identified early on that if the quantum description of the world is taken as  fundamental  then entanglement poses a problem of ``spooky action at a distance'' \cite{einstein1935can}.
Their conclusion was that quantum theory should be consistent with a deeper or more complete description of the physical world, in which such problems would disappear.
The import of seminal foundational results like the Bell \cite{bell1964einstein} and Bell--Kochen--Specker \cite{bell1966,ks} theorems is that they identify such non-intuitive behaviours and then rule out the possibility of finding \stress{any} underlying model for them that would not suffer from the same issues.
Incidentally, we note that the EPR paradox was originally presented in terms of continuous variables, whereas Bell's theorem addressed a discrete variable analogue of it.

In the previous section, we characterised contextuality of an empirical model by the absence of a global section for that empirical model.
We also saw that global sections capture one kind of underlying model for the behaviour,
namely via deterministic global states that assign predefined outcomes to all measurements. This is precisely the kind of model referred to in the Kochen--Specker theorem \cite{ks}.
Bell's theorem, on the other hand, pertains to a different kind of hidden-variable model, where the salient feature -- Bell locality -- is a kind of factorisability rather than determinism.
Fine \cite{fine1982hidden} showed that in one important measurement scenario (that of the concrete example from Fig.~\ref{fig:scenario}) the existence of one kind of model is equivalent to existence of the other.
Abramsky and Brandenburger \cite{ab} proved in full generality that this existential equivalence holds for any discrete-variable measurement scenario, and that global sections
of $\PMeasures \circ \Evs$
provide a canonical form of hidden-variable model.

In this section, we prove a Fine--Abramsky--Brandenburger theorem in the continuous-variable setting.
It establishes that in this setting there is also an unambiguous, unified description of Bell locality and noncontextuality, which is captured in a canonical way by the notion of extendability.

We will begin by introducing hidden-variable models in a more precise way.
The idea is that there exists some space $\bfLambda$ of hidden variables,
which determine the empirical behaviour.
However, elements of this space may not be directly empirically accessible themselves,
so we allow that we might only have
probabilistic information about them in the form of
a probability measure $p$ on $\bfLambda$.
The empirically observable behaviour should then arise
as an average over the hidden-variable behaviours.

\begin{definition}\label{def:hvmodel}
Let $\XMOO$ be a measurement scenario.
A \emph{hidden-variable} model\footnotemark\ on this scenario consists of the following ingredients:
\footnotetext{The alternative term \emph{ontological model} \cite{spekkens2005contextuality} has become widely used in quantum foundations in recent years. It indicates that the hidden variable, sometimes referred to as the ontic state, is supposed to provide an underlying description of the physical world at perhaps a more fundamental level than the empirical-level description via the quantum state for example.}
\begin{itemize}
    \item
    A measurable space $\bfLambda = \tuple{\Lambda,\Fc_\Lambda}$ of \emph{hidden variables}.
    \item
    A probability measure $p$ on $\bfLambda$.
    \item
    For each maximal context $C \in \Mc$,
    a probability kernel 
    $\fdec{k_C}{\bfLambda}{\Ec(C)}$,\footnotemark\  satisfying the following compatibility condition: for any maximal contexts $C, C' \in \Mc$,
    \begin{equation}\label{eq:compatibility_hiddenvar}
    \Forall{\lambda \in \Lambda} \quad k_C(\lambda,-)|_{C \cap C'} = k_{C'}(\lambda,-)|_{C \cap C'} \Mdot
    \end{equation}
\end{itemize}
\end{definition}

\footnotetext{Recall from Section~\ref{sec:preliminaries} that a probability kernel $\fdec{k_C}{\bfLambda}{\Ec(C)}$ is a function
$\fdec{k_C}{\Lambda \times \Fc_C}{[0,1]}$ which is
a measurable function in the first argument
and
a probability measure in the second argument.%
}

\begin{remark}\label{rem:kerneltohv}
Equivalently, we can regard Eq.~(\ref{eq:compatibility_hiddenvar}) as defining a function $\underline{k}$ from $\Lambda$ to the set of empirical models over $\XMOO$.
The function assigns to each $\lambda \in \Lambda$ the empirical model
$\underline{k}(\lambda) \defeq \family{\underline{k}(\lambda)_C}_{C \in \Mc}$, where the correspondence with the definition above is via $\underline{k}(\lambda)_C = k_C(\lambda,\dummy)$.
This function must be `measurable' in $\bfLambda$
in the sense that  $\fdec{\underline{k}(\dummy)_C(B)}{\Lambda}{[0,1]}$
is a measurable function for all $C \in \Mc$ and $B \in \Fc_C$.
\end{remark}

\begin{definition}
Let $\XMOO$ be a measurement scenario and $\tuple{\bfLambda,p,k}$ be a hidden-variable model.
Then the corresponding empirical model $e$ is given as follows:
for any maximal context $C \in \Mc$ and measurable set of joint outcomes $B \in \Fc_C$,
\[e_C(B) = \intg{\Lambda}{k_C(\dummy,B)}{p}  = \intg{\lambda \in \Lambda}{k_C(\lambda,B)}{p(\lambda)}\Mdot\]
\end{definition}

Note that our definition of hidden-variable model assumes the properties known as $\lambda$-independence \cite{Dickson1998} and parameter-independence \cite{Jarrett1984,Shimony1986}.
The former corresponds to the fact that the probability measure $p$ on the hidden-variable space is independent of the measurement context to be performed,
while the latter corresponds to the compatibility condition \eqref{eq:compatibility_hiddenvar},
which also ensures that the corresponding empirical model satisfies no-signalling \cite{brandenburger2013use}.
We refer the reader to 
\cite{BrandenburgerYanofsky2008} for a detailed discussion of these and other properties of hidden-variable models specifically in the case of multi-party Bell scenarios.

The idea behind the introduction of hidden variables is that they could \textit{explain away} some of the more non-intuitive aspects of the empirical predictions of quantum mechanics,
which would then arise as resulting from an incomplete knowledge of the true state of a system rather than
being a fundamental feature.
There is some precedent for this in physical theories: for instance, statistical mechanics -- a probabilistic theory -- admits a deeper,
albeit usually unwieldily complex, description in terms of classical mechanics,
which is purely deterministic.
Therefore, it is desirable to
impose conditions on hidden-variable models which amount
to requiring that they behave in some sense \stress{classically}
when conditioned on each particular value of the hidden variable $\lambda$.
This motivates the notions of deterministic and of factorisable hidden-variable models.

\begin{definition}
A hidden-variable model $\tuple{\bfLambda,p,k}$ is \emph{deterministic} if the probability kernel
$\fdec{k_C(\lambda,\dummy)}{\Fc_C}{[0,1]}$ is a Dirac measure
for every $\lambda \in \Lambda$
and for every maximal context $C \in \Mc$;
in other words, there is an assignment
$\veco\in O_C$ such that $k_C(\lambda,\dummy) = \delta_{\veco}$.
\end{definition}

In general discussions on hidden-variable models (\eg \cite{BrandenburgerYanofsky2008}), the condition above, requiring that each hidden variable determines a unique joint outcome for each measurement context, is sometimes referred to as weak determinism. This is contraposed to strong determinism,
which demands not only that each hidden variable fix a deterministic outcome to each individual measurement,
but that this outcome be independent of the context in which the measurement is performed.
Note, however, that since our definition of hidden-variable models assumes the compatilibity condition \eqref{eq:compatibility_hiddenvar}, \ie parameter-independence, 
both notions of determinism coincide \cite{brandenburger2013use}.

\begin{definition}
A hidden-variable model $\tuple{\bfLambda,p,k}$ is \emph{factorisable} if
$\fdec{k_C(\lambda,\dummy)}{\Fc_C}{[0,1]}$ factorises as a product measure
for every $\lambda \in \Lambda$
and for every maximal context $C \in \Mc$.
That is, for any family of
measurable sets $\family{B_x \in \Fc_x}_{x \in C}$ with $B_x \neq O_x$ only for finitely many $x \in C$,
\[k_C(\lambda,\prod_{x\in C} B_x) = \prod_{x\in C} k_C|_{\enset{x}}(\lambda,B_x)\]
where
$k_C|_{\enset{x}}(\lambda,\dummy)$ is the marginal of the probability measure
$k_C(\lambda,\dummy)$ on $\bfO_C=\prod_{x \in C}\bfO_x$ to the space $\bfO_{\enset{x}} = \bfO_x$.\footnotemark
\end{definition}

\footnotetext{
Note that, due to the assumption of parameter independence (Eq.~\eqref{eq:compatibility_hiddenvar}),
we can unambiguously write $k_x(\lambda,\dummy)$ for
$k_C|_{\enset{x}}(\lambda,\dummy)$, as this marginal is independent of the context $C$ from which one is restricting.
}

\begin{remark}
In other words, if we consider elements of $\Lambda$ as inaccessible `empirical' models -- \ie if we use the alternative definition of hidden-variable models using the map $\underline{k}$ (see Remark~\ref{rem:kerneltohv}) -- then factorisability is the requirement that each of these be factorisable in the sense that
 \[\underline{k}_C(\lambda)\left(\prod_{x\in C} B_x\right) = \prod_{x\in C} \, \underline{k}_C(\lambda)|_{\enset{x}}(B_x)
 \]
 where $\underline{k}_C|_{\enset{x}}(\lambda)$ is the marginal of the probability measure
 $\underline{k}_C(\lambda)$ on $\bfO_C=\prod_{x \in C}\bfO_x$ to the space $\bfO_x$.
\end{remark}

We now prove the continuous-variable analogue of the theorem proved in the discrete probability setting by Abramsky and Brandenburger \cite[Proposition 3.1 and Theorem 8.1]{ab},
generalising the result of Fine \cite{fine1982hidden} to arbitrary measurement scenarios.

In particular, this result shows that the measurable space $\Evs(X) = \bfO_X$ provides a canonical hidden-variable space.
The proof that \ref{it:ext} $\Rightarrow$ \ref{it:det} in the Theorem below shows how a global probability measure extending an empirical model $e$ can be understood as giving a deterministic hidden-variable model with $\bfLambda = \Evs(X)$. Canonicity is then established together with the proof that \ref{it:fac} $\Rightarrow$ \ref{it:ext}, to the effect that if a given empirical model admits any factorisable hidden-variable model then it admits a deterministic model of the form just mentioned (with $\Evs(X)$ being the hidden-variable space).

\begin{theorem}
Let $e$ be an empirical model on a measurement scenario $\XMOO$. The following are equivalent:
\begin{enumerate}[label=(\arabic*)]
\item\label{it:ext} $e$ is extendable;
\item\label{it:det} $e$ admits a realisation by a deterministic hidden-variable model;
\item\label{it:fac} $e$ admits a realisation by a factorisable hidden-variable model.
\end{enumerate}
\end{theorem}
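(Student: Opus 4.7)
The plan is to follow the cyclic strategy of Abramsky--Brandenburger, proving $\ref{it:ext}\Rightarrow\ref{it:det}\Rightarrow\ref{it:fac}\Rightarrow\ref{it:ext}$, while replacing the finite summations of the discrete-variable proof by measure-theoretic arguments involving push-forwards, product measures indexed by a parameter, and integration against kernels. The key continuous-variable subtleties will be to confirm measurability of the objects constructed along the way, and to reduce identities between probability measures to identities on $\pi$-systems of measurable rectangles.

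For $\ref{it:ext}\Rightarrow\ref{it:det}$, given an extension $\mu$ of $e$ on $\bfO_X$, I would take the hidden-variable space to be $\bfLambda\defeq\bfO_X$ with prior $p\defeq\mu$, and define the kernel by $k_C(\veco,B)\defeq\delta_{\veco|_C}(B)=\chi_B(\rho^X_C(\veco))$. That $k_C(\veco,\dummy)$ is a Dirac probability measure is immediate; measurability in $\veco$ follows from measurability of the projection $\rho^X_C$, which is a morphism in $\catMeas$. Parameter-independence holds because $\delta_{\veco|_C}|_{C\cap C'}=\delta_{\veco|_{C\cap C'}}=\delta_{\veco|_{C'}}|_{C\cap C'}$, and the realisation identity reduces via~\eqref{eq:integralindicator} and the change-of-variables formula~\eqref{eq:changeofvariables} to $\intg{\bfO_X}{\chi_{\rho^{-1}(B)}}{\mu}=\mu|_C(B)=e_C(B)$. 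Step $\ref{it:det}\Rightarrow\ref{it:fac}$ is trivial: a Dirac measure $\delta_{\veco}$ on the product space $\bfO_C=\prod_{x\in C}\bfO_x$ equals the product $\prod_{x\in C}\delta_{o_x}$ of its own Dirac marginals on measurable rectangles, so a deterministic model is automatically factorisable.

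The substantive direction is $\ref{it:fac}\Rightarrow\ref{it:ext}$. Let $\tuple{\bfLambda,p,k}$ be a factorisable realisation of $e$. By parameter-independence the one-point marginals $k_x(\lambda,\dummy)$ are well defined for every $x\in X$, independently of any context containing $x$. For each $\lambda\in\Lambda$, I would form the finite product probability measure
\[
\nu_\lambda \;\defeq\; \prod_{x\in X} k_x(\lambda,\dummy)
\]
on the global outcome space $\bfO_X$, and then define $\mu$ on $\bfO_X$ by $\mu(B)\defeq\intg{\bfLambda}{\nu_\lambda(B)}{p(\lambda)}$. I would then show that $\mu|_C=e_C$ for every maximal context $C\in\Mc$: on a measurable rectangle $B=\prod_{x\in C}B_x$ in $\bfO_C$, factorisability gives
\[
\nu_\lambda\bigl(\rho^{-1}_C(B)\bigr) \;=\; \prod_{x\in C} k_x(\lambda,B_x) \;=\; k_C(\lambda,B),
\]
and integrating against $p$ recovers $e_C(B)$; the two probability measures $\mu|_C$ and $e_C$ then agree on the $\pi$-system of rectangles generating $\Fc_C$, so they agree on $\Fc_C$ by Dynkin's $\pi$-$\lambda$ theorem.

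The main obstacle I anticipate is the measure-theoretic plumbing needed to make the construction of $\mu$ rigorous, specifically showing that $\lambda\mapsto\nu_\lambda(B)$ is $\Fc_\Lambda$-measurable for every $B\in\Fc_X$, so that the integral defining $\mu(B)$ makes sense. I would handle this by the monotone class theorem: on the $\pi$-system of measurable rectangles the map $\lambda\mapsto\prod_{x\in X}k_x(\lambda,B_x)$ is a finite product of measurable functions, hence measurable; then the collection of $B\in\Fc_X$ for which measurability holds is a $\lambda$-system containing this $\pi$-system, hence all of $\Fc_X$. Countable additivity of $\mu$ will follow from monotone convergence, and the finiteness $\mu(O_X)=1$ is immediate. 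This argument also justifies the finite product measure $\nu_\lambda$ itself as an instance of the standard iterated product construction for $\sigma$-finite measures, applied here to probability measures where uniqueness of the product is guaranteed.
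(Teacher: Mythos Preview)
Your proposal is correct and follows exactly the paper's cyclic strategy $\ref{it:ext}\Rightarrow\ref{it:det}\Rightarrow\ref{it:fac}\Rightarrow\ref{it:ext}$, with the same canonical constructions at each step. Your treatment of $\ref{it:fac}\Rightarrow\ref{it:ext}$ is in fact more careful than the paper's, which defines $\mu$ only on rectangles via~\eqref{eq:mudef} and leaves both the extension to all of $\Fc_X$ and the measurability of $\lambda\mapsto\nu_\lambda(B)$ implicit; your use of the fibrewise product measures $\nu_\lambda$ together with the monotone class/$\pi$--$\lambda$ argument fills this in explicitly.
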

\begin{proof}
We prove the sequence of implications \ref{it:ext} $\Rightarrow$ \ref{it:det} $\Rightarrow$ \ref{it:fac} $\Rightarrow$ \ref{it:ext}.

\textbf{\ref{it:ext} $\Rightarrow$ \ref{it:det}.}
The idea is that $\Ec(X)=\bfO_X$ provides a canonical deterministic hidden-variable space. Suppose that $e$ is extendable to a global probability measure $\mu$. Let us set
\begin{align*}
\bfLambda &\defeq \bfO_X  \\
p &\defeq \mu \\ 
k_C(\vecg,\dummy) &\defeq \delta_{\vecg|_C}
\end{align*}
for all global outcome assignments $\vecg \in O_X$.
This is by construction a deterministic hidden-variable model,
which we claim gives rise to the empirical model $e$.

Let $C \in \Mc$ and write $\fdec{\rho}{\bfO_X}{\bfO_C}$ for the  measurable projection
which, in the event sheaf, is the restriction map $\fdec{\rho^X_C = \Evs(C \subseteq X)}{\Evs(X)}{\Evs(C)}$.

For any $E \in \Fc_C$, we have
\begin{equation}\label{eq:deltaindicator_aux}
k_C(\vecg,E) = \delta_{\vecg|_C}(E) = \delta_{\rho(\vecg)}(E) = \chi_{_E}(\rho(\vecg))
=
(\chi_{_E} \circ \rho)(\vecg)
\end{equation}
and therefore, as required,
\begin{calculation}
\intg{\bfLambda}{k_C(\dummy,E)}{p}
\just={$\bfLambda = \bfO_X$; $p = \mu$; $k_C(\dummy,E) = \chi_{_E} \circ \rho$ by Eq.~\eqref{eq:deltaindicator_aux}}
\intg{\bfO_X}{\chi_{_E} \circ \rho}{\mu}
\just={change of variables, Eq.~\eqref{eq:changeofvariables}}
\intg{\bfO_C}{\chi_{_E}}{\rho_*\mu}
\just={marginalisation for probability measures}
\intg{\bfO_C}{\chi_{_E}}{\mu|_C}
\just={integral of indicator function, Eq.~\eqref{eq:integralindicator}}
\mu|_C(E)
\just={$\mu$ extends the empirical model $e$}
e_C(E) \Mdot
\end{calculation}%

\textbf{\ref{it:det} $\Rightarrow$ \ref{it:fac}.}
We show that if a hidden-variable model $\tuple{\bfLambda,p,k}$ is deterministic then it is also factorisable.
For this, it is sufficient to notice that a Dirac measure
$\delta_{\veco}$ with $\veco \in O_C$ on a product space $\bfO_C=\prod_{x \in C}\bfO_x$
factorises as the product of Dirac measures
\[\delta_{\veco} =
\prod_{x \in C}\delta_{\veco(x)}
=
\prod_{x \in C}\delta_{\veco|_{\enset{x}}} \Mdot
\]



\textbf{\ref{it:fac} $\Rightarrow$ \ref{it:ext}.}
Suppose that $e$ is realised by a factorisable hidden-variable model $\tuple{\bfLambda,p,k}$.
Write $k_x$ for $k_C|_{\enset{x}}$ as in the definition of factorisability.
Define a measure $\mu$ on $\bfO_X$ as follows:
given a family of measurable sets $\family{E_x \in \Fc_x}_{x\in X}$ with
$E_x = O_x$ for all but finitely many $x \in X$,
the value of $\mu$ on the corresponding cylinder, $\prod_{x\in X}E_x$, is given by
\begin{equation}\label{eq:mudef}
\mu\left(\prod_{x\in X}E_x\right) \defeq
\intg{\Lambda}{\left(\prod_{x\in X} k_x(\dummy,E_x)\right)}{p}
=
\intg{\lambda \in \Lambda}{\left(\prod_{x\in X} k_x(\lambda,E_x)\right)}{p(\lambda)}
\Mcomma
\end{equation}
where the product on the right-hand side is a product of finitely many real numbers in the interval $[0,1]$, since $k_x(\lambda,O_x) = 1$ and so $k_x(\lambda,E_x) \neq 1$ for only finitely many $x \in X$.
Note that the $\sigma$-algebra of $\bfO_X$ is the tensor product $\sigma$-algebra $\Fc_X = \bigotimes_{x \in X}\Fc_x$, which is generated by such cylinders; hence the equation above uniquely determines $\mu$ as a measure on $\bfO_X$.

Now, we show that this is a global section for the empirical probabilities.
Let $C \in \Mc$ and consider 
a `cylinder' set $F = \prod_{x\in C}F_x$ with $F_x \in \Fc_X$ and $F_x \neq O_x$ only for finitely many $x \in C$.
Then

\begin{calculation}
\mu|_{C}(F)
\just={definition of marginalisation}
\mu(F \times O_{X \setminus C})
\just={definition of $F$ and $O_{U}$}
\mu(\prod_{x \in C}F_x \times \prod_{x \in X \setminus C}O_{x})
\just={definition of $\mu$, Eq.~\eqref{eq:mudef}}
\intg{\Lambda}{\left(\prod_{x\in C} k_x(\dummy,F_x)\right)\left(\prod_{x \in X \setminus C} k_x(\dummy,O_x)\right)}{p}
\just={$k_x(\lambda,\dummy)$ is a probability measure so $k_x(\lambda,O_x)=1$}
\intg{\Lambda}{\left(\prod_{x\in C} k_x(\dummy,F_x)\right)}{p}
\just={factorisability of the hidden-variable model}
\intg{\Lambda}{k_C(\dummy,\prod_{x\in C} F_x)}{p}
\just={definition of $F$}
\intg{\Lambda}{k_C(\dummy,F)}{p}
\just={$e$ is the empirical model corresponding to $\tuple{\Lambda,p,k}$}
e_C(F)
\end{calculation}

Since the $\sigma$-algebra $\Fc_C$ of $\bfO_C$ is generated by the cylinder sets of the form above and we have seen that $\mu|_C$ agrees with $e_C$ on these sets, we conclude that $\mu|_C = e_C$ as required.
\end{proof}

    \section{Quantifying contextuality}\label{sec:quantifying}
        Beyond questioning whether a given empirical behaviour is contextual or not,
it is also interesting to ask to what \stress{degree} it is contextual.
In discrete-variable scenarios, a very natural measure of contextuality
is the contextual fraction \cite{ab}.
This measure was shown in \cite{abramsky2017contextual} to have
a number of very desirable properties.
It can be calculated using linear programming,
an approach that subsumes the more traditional approach to quantifying
nonlocality and
contextuality using Bell and noncontextuality inequalities in the sense that we
can understand the (dual) linear program as optimising over \stress{all} such inequalities for the scenario in question and returning
the maximum normalised
violation of \stress{any} Bell or noncontextuality inequality achieved by the given empirical model.
Crucially, the contextual fraction was also shown to \stress{quantifiably} relate to
quantum-over-classical advantages in specific informatic tasks \cite{abramsky2017contextual,mansfield2018quantum,linde}.
Moreover it has been demonstrated to be a monotone with respect to the free
operations of resource theories for contextuality
\cite{abramsky2017contextual,duarte2018resource,abramsky2019comonadic}.

In this section, we consider how to carry those ideas
to the continuous-variable setting. 
The formulation as a linear optimisation problem and the attendant correspondence with Bell inequality violations
requires special care as one needs to use infinite linear programming,
necessitating some extra assumptions on the outcome measurable spaces.

\subsection{The contextual fraction}

Asking whether a given behaviour is noncontextual amounts to
asking whether the empirical model is extendable, or in other words whether
it admits a deterministic hidden-variable model.
However, a more refined question to pose is
\stress{what fraction of the behaviour
admits a deterministic hidden-variable model?}
This quantity is what we call the noncontextual fraction.
Similarly, the fraction of the behaviour that is left over
and that can thus be considered \stress{irreducibly} contextual
is what we call the contextual fraction.

\begin{definition}
Let $e$ be an empirical model on the scenario $\XMOO$.
The \emph{noncontextual fraction} of $e$, written $\NCF(e)$, is defined as
\[\sup\setdef{\mu(O_X)}{\mu \in \Measures(\bfO_X), \, \Forall{C \in \Mc} \mu|_C \leq e_C} \Mdot \]
Note that since $e_C \in \PMeasures(\bfO_C)$ for all $C \in \Mc$ it follows that $\NCF(e) \in [0,1]$. The \emph{contextual fraction} of $e$, written $\CF(e)$, is given by $\CF(e) \defeq 1 - \NCF(e)$.
\end{definition}

\subsection{Monotonicity under free operations including binning}

In the discrete-variable setting, the contextual fraction was shown to be a monotone under a number of natural classical operations that 
transform and combine empirical models and control their use as resources, 
therefore constituting the `free' operations of a resource theory of contextuality \cite{abramsky2017contextual,duarte2018resource,abramsky2019comonadic}.

All of the operations defined for discrete variables in \cite{abramsky2017contextual} -- viz. translations of measurements, transformation of outcomes, probabilistic mixing, product, and choice -- carry almost verbatim to our current setting.
One detail is that one must insist that the coarse-graining of outcomes be achieved by (a family of) measurable functions.
A particular example of practical importance is \emph{binning}, which is widely used in continuous-variable quantum information as
a method of discretising data by partitioning the outcome space $\bfO_x$ for each measurement $x \in X$ into
a finite number of `bins', \ie measurable sets.
Note that a binned empirical model is obtained by pushing forward along a family $\family{t_x}_{x\in X}$ of outcome translations $\fdec{t_x}{\bfO_x}{\bfO'_x}$ where $\bfO'_x$ is finite for all $x \in X$.

For the conditional measurement operation introduced in \cite{abramsky2019comonadic}, which allows for adaptive measurement protocols such as those used in measurement-based quantum computation \cite{raussendorf2001one},
one must similarly insist that the map determining the next measurement to perform based on the observed outcome of a previous measurement 
 be a measurable function. Since we are, for the moment, only considering scenarios where the measurements are treated as constituting a discrete set, this amounts to a partition of the outcome space $\bfO_x$ of the first measurement, $x$, into measurable subsets labelled by measurements compatible with $x$, indicating which will be subsequently performed depending on the outcome observed for $x$.

The inequalities establishing monotonicity from
\cite[Theorem 2]{abramsky2017contextual}
also hold for continuous variables.
There is a caveat for the equality formula for the product of two empirical models: 
\begin{equation*}
\NCF(e_1 \otimes e_2) = \NCF(e_1)\NCF(e_2).
\end{equation*} 
Whereas the inequality establishing monotonicity ($\geq$) stills holds in general, the proof establishing the other direction ($\leq$) makes use of duality of linear programs. Therefore, it only holds under the assumptions we will impose in the remainder of this section.

\begin{proposition}
If $e$ is an empirical model, and $e^\text{bin}$ is any discrete-variable empirical model obtained from $e$ by binning, then contextuality of $e^\text{bin}$ witnesses contextuality of $e$, and quantifiably gives a lower bound $\CF(e^\text{bin}) \leq \CF(e)$.
\end{proposition}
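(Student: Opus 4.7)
The plan is to derive this as a direct corollary of the general monotonicity of $\NCF$ under outcome translations, which the text has already noted carries over essentially verbatim from the discrete-variable case. Binning is exactly an outcome translation $\family{t_x}_{x \in X}$ in which each target space $\bfO'_x$ is finite (and discrete). What I will check is that the push-forward along $t = \prod_{x \in X} t_x$ sends any witness for $\NCF(e)$ to a witness for $\NCF(e^{\text{bin}})$ with the same total mass, giving $\NCF(e^{\text{bin}}) \geq \NCF(e)$ and hence $\CF(e^{\text{bin}}) \leq \CF(e)$.

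First, I would set up the relevant maps and verify their naturality with respect to the event presheaf. Each $\fdec{t_x}{\bfO_x}{\bfO'_x}$ is measurable; taking products, we obtain measurable maps $\fdec{t}{\bfO_X}{\bfO'_X}$ and, for every context $C \in \Mc$, $\fdec{t_C}{\bfO_C}{\bfO'_C}$. Since projections and products commute, we have the naturality square $\pi'_C \circ t = t_C \circ \pi_C$, where $\pi_C$ and $\pi'_C$ are the projections onto the $C$-components. By definition of the binned model, $e^{\text{bin}}_C = (t_C)_* e_C$.

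Second, I would verify the monotonicity inequality. Let $\mu \in \Measures(\bfO_X)$ be any measure with $\mu|_C \leq e_C$ for all $C \in \Mc$, and set $\nu \defeq t_*\mu \in \Measures(\bfO'_X)$. The total mass is preserved: $\nu(O'_X) = \mu(t^{-1}(O'_X)) = \mu(O_X)$. Using the naturality square together with functoriality of push-forward $(g \circ f)_* = g_* \circ f_*$, the marginal on $C$ satisfies
\[
\nu|_C \;=\; (\pi'_C)_*(t_*\mu) \;=\; (t_C \circ \pi_C)_*\mu \;=\; (t_C)_*(\mu|_C) \Mdot
\]
Now push-forward is monotone with respect to the pointwise order on measures: if $\nu_1 \leq \nu_2$ on $\bfX$ and $\fdec{f}{\bfX}{\bfY}$ is measurable, then for any $E \in \Fc_Y$, $(f_*\nu_1)(E) = \nu_1(f^{-1}(E)) \leq \nu_2(f^{-1}(E)) = (f_*\nu_2)(E)$. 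Applying this with $f = t_C$ to the hypothesis $\mu|_C \leq e_C$ yields $\nu|_C = (t_C)_*(\mu|_C) \leq (t_C)_* e_C = e^{\text{bin}}_C$. Hence $\nu$ is admissible in the supremum defining $\NCF(e^{\text{bin}})$, giving $\NCF(e^{\text{bin}}) \geq \nu(O'_X) = \mu(O_X)$. Taking the supremum over all such $\mu$ yields $\NCF(e^{\text{bin}}) \geq \NCF(e)$, and therefore $\CF(e^{\text{bin}}) = 1 - \NCF(e^{\text{bin}}) \leq 1 - \NCF(e) = \CF(e)$. The ``witnessing'' claim is then immediate: if $\CF(e^{\text{bin}}) > 0$ then $\CF(e) > 0$.

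The argument contains no substantial obstacle; it is essentially the specialisation of the resource-theoretic monotonicity under outcome translations to the case where the targets are finite. The only points that require a moment of care are the two elementary facts used tacitly above — namely, that push-forward commutes with marginalisation via the naturality square $\pi'_C \circ t = t_C \circ \pi_C$, and that push-forward preserves the pointwise order between (sub-probability) measures. Both are immediate from the definitions, but stating them explicitly keeps the proof self-contained without appealing to the full linear-programming apparatus, which is important since this proposition does not rely on the extra assumptions needed for infinite LP duality.
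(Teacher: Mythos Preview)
Your proof is correct and follows the same approach the paper intends: the proposition is stated there without proof, as an immediate consequence of the fact that binning is a special case of outcome translation and that the monotonicity inequalities from \cite[Theorem~2]{abramsky2017contextual} carry over verbatim to the continuous-variable setting. Your write-up simply makes that monotonicity argument explicit at the level of the defining supremum for $\NCF$, which is a welcome elaboration since the paper leaves the details implicit.
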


\subsection{Assumptions on the outcome spaces}
\label{subsec:assumption1}

In order to phrase the problem of contextuality as an (infinite) linear programming problem and establish the connection with violations of Bell inequalities, we need to impose some conditions on the measurement scenarios, and in particular on the measurable spaces
of outcomes.

First, from now on we assume that we have a \emph{finite number of measurement labels} \ie that $X$ is finite.

Moreover,
we restrict attention to the case where
the outcome space $\bfO_x$ for each measurement $x \in X$
is the Borel measurable space for a compact Hausdorff space,
\ie that the set $O_x$ is a compact space
and $\Fc_x$ is the $\sigma$-algebra generated by its open sets, written $\Borel(O_x)$.
Note that this includes most situations of interest in practice. 
In particular, it includes the case of measurements with outcomes in a bounded subspace of $\R$ or $\R^n$. This is also experimentally motivated since measurement devices are energetically bounded.
The central missing piece is the case of locally compact spaces, in order to include measurements with outcomes in $\R$ or $\R^n$, which is theoretically relevant ($\R$ would be the canonical outcome space for the quadratures of the electromagnetic field, for instance). We address this issue in the next section and show that it reduces to the compact case.

To summarise we make the following two assumptions here (we will slightly relax the second one later):
\begin{enumerate}[label=(\roman*)]
    \item $X$ is a finite set of measurement labels,
    \item for each $x \in X$, the outcome space $\bfO_x$ is a compact Hausdorff space.
\end{enumerate} 

%
%

To obtain an infinite linear program, we need to work with vector spaces. However, probability measures, or even finite or arbitrary measures, do not form one. We will therefore consider the set 
$\FSMeasures(\bfY)$ of \emph{finite signed measures} (a.k.a.~real measures) on a measurable space ${ \bfY = \tuple{Y,\Fc_Y} }$. These are functions $\fdec{\mu}{\Fc_Y}{\R}$ such that $\mu(\emptyset)=0$ and $\mu$ is $\sigma$-additive. In comparison to the definition of a measure, one drops the nonnegativity requirement, but insists that the values be finite.
The set $\FSMeasures(\bfY)$ forms a real vector space which includes the probability measures $\PMeasures(\bfY)$, and total variation gives a norm on this space.
When $Y$ is a compact Hausdorff space and $\bfY = \tuple{Y,\Borel(Y)}$,
the Riesz--Markov--Kakutani representation theorem \cite{kakutani} says that $\FSMeasures(\bfY)$
is a concrete realisation of the
topological dual space of $C(Y,\R)$,
the space of continuous real-valued functions on $Y$.
The duality is given by
$\tuple{\mu,f} \defeq \intg{\bfY}{f}{\mu}$
for $\mu \in \FSMeasures(\bfY)$ and $f \in C(Y,\R)$.
%

\subsection{Linear programming}

Consider an empirical model $e = \family{e_C}_{C \in \Mc}$ on a scenario $\XMO$ satisfying the assumptions discussed above.
Calculation of its noncontextual fraction can be expressed as the infinite linear programming problem \refprog{LP-CF}.
This is our primal linear program; its dual linear program is given by \refprog{DLP-CF}.
In what follows, we will see how to derive the dual and show that the optimal solutions of both programs coincide.
We also refer the interested reader to Appendix \ref{sec:appendix_std_form}
where the programs are expressed in the standard form of infinite linear programming \cite{barvinok_02}.

\leqnomode 
\begin{flalign*}
    \label{prog:LP-CF}
    \tag*{(P-CF)}
    \hspace{3cm}\left\{
    \begin{aligned}
        & \quad \text{Find } \mu \in \FSMeasures(\bm O_X) \\
        & \quad \text{maximising } \mu(O_X) \\
        & \quad \text{subject to:}  \\
        &  \hspace{1cm} \begin{aligned}
            & \forall C \in \Mc,\; \mu|_C \;\leq\; e_C \\
            & \mu \;\geq\; 0 \Mdot
        \end{aligned}
    \end{aligned}
    \right. &&
\end{flalign*}
\begin{flalign*}
    \label{prog:DLP-CF}
    \tag*{(D-CF)}
    \hspace{3cm}\left\{
    \begin{aligned}
        & \quad \text{Find } \family{f_C}_{C \in \Mc} \in \prod_{C \in \Mc} C(O_C) \\
        & \quad \text{minimising } \sum_{C \in \Mc} \intg{O_C}{f_C}{e_C} \\
        & \quad \text{subject to:}  \\
        & \hspace{1cm} \begin{aligned}
            & \sum_{C \in \Mc} f_C \circ \rho^X_C \;\geq\; \mathbf{1}_{O_X} \\
            & \forall C \in \Mc,\; f_C \;\geq\; \mathbf{0}_{O_C} \Mdot
        \end{aligned}
    \end{aligned}
    \right. &&
\end{flalign*}
\reqnomode

\noindent
We have written $\rho^X_C$ for the projection ${O_X}\lto{O_C}$ as before,
and $\mathbf{1}_D$ (resp. $\mathbf{0}_D$)
for the constant function $D \lto \R$ that assigns the number $1$ (resp. $0$) to all elements of its domain $D$; in the above instance, to all $\vecg \in O_X$ (resp. all $\veco \in O_C$).\footnotemark\ We denote the optimal values of problems \refprog{LP-CF} and \refprog{DLP-CF}, respectively, as
$\val{\text{P-CF}}$ and $\val{\text{D-CF}}$. 
They both equal $\NCF(e)$ due to strong duality (see Proposition~\ref{prop:strong_duality} and Appendix \ref{sec:appendix_proof_zeroduality}).

\footnotetext{Note that $\mathbf{1}_D$ is just a simplified notation for the indicator function on the whole domain; \ie $\fdec{\mathbf{1}_D = \chi_{_{D}}}{D}{\R}$. Similarly, $\mathbf{0}_D$ is the indicator function of the empty set; i.e. $\fdec{\mathbf{0}_D = \chi_{_\emptyset}}{D}{\R}$.}

Analogues of these programs have been studied in the discrete-variable setting \cite{abramsky2017contextual}.
Note, however, that in general these continuous-variable linear programs are over infinite-dimensional spaces and thus not practical to compute directly.
For this reason, in Section~\ref{sec:sdp} we will introduce a hierarchy of finite-dimensional
semi-definite programs that approximate the solution of \refprog{LP-CF} to arbitrary precision.

\subsubsection*{Deriving the dual via the Lagrangian}

We now give an explicit derivation of \refprog{DLP-CF} as the dual of \refprog{LP-CF} via the Lagrangian method.
To simplify notation, we set $E_1 \defeq \FSMeasures(\bfO_X)$ and $F_2 \defeq \prod_{C \in \Mc} C(O_C,\R)$ and their convex cones $K_1$ and $K_2^*$ (see Appendix~\ref{sec:appendix_std_form}).
This matches the standard form notation for infinite linear programming of \cite{barvinok_02}, in which we present our programs in Appendix \ref{sec:appendix_std_form}.
Hence we introduce $\lvert \Mc \rvert$ dual variables,
one continuous map $f_C \in C(O_C,\R)$ for each $C\in \Mc$, to account for the constraints $\mu|_{C} \leq e_C$.
From \refprog{LP-CF}, we then define the Lagrangian $\Lc : K_1 \times K_2^* \longrightarrow \R$ as
\[
    \Lc\left(\mu,(f_C)\right) \defeq \underbrace{\mu(O_X)\vphantom{\sum_{C \in \Mc}}}_{\text{objective}} + \underbrace{\sum_{C \in \Mc} \intg{O_C}{f_C}{(e_C - \mu|_C)}}_{\text{constraints}} \Mdot
\]
The primal program \refprog{LP-CF} corresponds to 
\[
    \sup_{\mu \in K_1} \; \inf_{(f_C) \in K_2^*} \; \Lc(\mu,(f_C)) \Mcomma
\]
as the infimum here imposes the constraints that $\mu|_C \leq e_C$ for all $C \in \Mc$, for otherwise the Lagrangian diverges.
If these constraints are satisfied, then because of the infimum, the second term of the Lagrangian vanishes yielding the objective of the primal problem.
To express the dual, which amounts to permuting the infimum and the supremum, we need to rewrite the Lagrangian:
\begin{align*}
    \Lc(\mu,(f_C)) \quad&=\quad \mu(O_X) + \sum_{C \in \Mc} \intg{O_C}{f_C}{(e_C - \mu|_C)} \\
                     \quad&=\quad \intg{O_X}{\mathbf{1}}{\mu} + \sum_{C \in \Mc} \intg{O_C}{f_C}{e_C}  - \sum_{C \in \Mc} \intg{O_C}{f_C}{\mu|_C} \\
                \quad&=\quad \intg{O_X}{\mathbf{1}}{\mu} + \sum_{C \in \Mc} \intg{O_C}{f_C}{e_C}  - \sum_{C \in \Mc} \intg{O_X}{f_C \circ \rho^X_C}{\mu} \\      
                     \quad&=\quad \intg{O_X}{\mathbf{1}}{\mu} + \sum_{C \in \Mc} \intg{O_C}{f_C}{e_C} - \intg{O_X}{\left( \sum_{C \in \Mc} f_C \circ \rho^X_C \right)}{\mu}  \\ 
                     \quad&=\quad \sum_{C \in \Mc} \intg{O_C}{f_C}{e_C} + \intg{O_X}{\left( \mathbf{1} - \sum_{C \in \Mc} f_C \circ \rho^X_C \right)}{\mu} \Mdot
\end{align*}
The dual program \refprog{DLP-CF} indeed corresponds to
\[
    \inf_{(f_C) \in K_2^*} \; \sup_{\mu \in K_1} \; \Lc(\mu,(f_C)) \Mdot
\]
The supremum imposes that $\sum_{C \in \Mc} f_C \circ \rho^X_C \geq \mathbf{1}$ on $O_X$, since otherwise the Lagrangian diverges. If this constraint is satisfied, then the supremum makes the second term vanish yielding the objective of the dual problem \refprog{DLP-CF}.

\subsubsection*{Zero duality gap}
A key result about the noncontextual fraction, which is essential in establishing the connection to Bell inequality violations, is that \refprog{LP-CF} and \refprog{DLP-CF} are strongly dual, in the sense that no gap exists between their optimal values.
Strong duality always holds in finite linear programming, but it does not hold in general for the infinite case.

\begin{proposition}
\label{prop:zero_duality}
Problems {\upshape \refprog{LP-CF}} and {\upshape \refprog{DLP-CF}} have zero duality gap and their optimal values satisfy
{\upshape
\[
    \val{\text{P-CF}} = \val{\text{D-CF}} = \NCF(e)
\]
}
\label{prop:strong_duality}
\end{proposition}
\begin{proof}
This proof relies on \cite[Theorem 7.2]{barvinok_02}.
The complete proof is provided in Appendix~\ref{sec:appendix_proof_zeroduality}. 
Here, we only provide a brief outline. 
Let $E_1 \defeq \FSMeasures(\bfO_X) \times \prod_{C \in \Mc} \FSMeasures(\bfO_C)$ and $E_2 \defeq \prod_{C \in \Mc} \FSMeasures(\bfO_C)$. Strong duality between \refprog{LP-CF} and \refprog{DLP-CF} amounts to showing that the cone
\[\Kc = \setdef{\left(\, (\mu|_C + \nu_C)_{C\in\Mc}, \mu(O_X)\, \right)}{(\mu,(\nu_C)_{C\in\Mc}) \in E_{1+}} \]
is weakly closed in $E_2 \oplus \R$, where:
\begin{equation*}
E_{1+} \defeq \setdef{ (\mu,(\nu_C)_{C \in \Mc}) \in E_1}{\mu \geq 0 \text{ and } \Forall{C \in \Mc} \nu_C \geq 0}\subset E_1. 
\end{equation*}
We do so by considering a sequence $(\mu^k,(\nu_C^k)_C)_{k \in \N}$ in $E_{1+}$ and showing that the accumulation point
\begin{equation*}
    \lim_{k \rightarrow \infty} \left( (\mu^k|_C+\nu^k)_{C\in\Mc}, \mu^k(O_X) \right)
\end{equation*}
belongs to $\Kc$.
\end{proof}
        
    \section{The case of local compactness}\label{sec:localcompact}
        We now focus on cases where the outcome space might be only locally compact. These include most theoretical situations that are of interest in practice. For instance, $\R$ could be the outcome space for the position and momentum operators. 

For each measurement $x \in X$, $\bm O_x$
is supposed to be the Borel measurable space for a second-countable locally compact Hausdorff space,
\ie that the set $O_x$ is equipped with a second-countable locally compact Hausdorff topology
and $\Fc_x$ is the $\sigma$-algebra generated by its open sets, written $\Borel(O_x)$.
Second countability and Hausdorffness of two spaces $Y$ and $Z$ suffice to show that
the Borel $\sigma$-algebra of the product topology is the tensor product of the Borel $\sigma$-algebras, \ie $\Borel(Y \times Z) = \Borel(Y) \otimes \Borel(Z)$
\cite[Lemma 6.4.2 (Vol.~2)]{Bogachev}. Hence, these assumptions
guarantee that $\bm O_U$ for $U \in \Pc(X)$ is the Borel $\sigma$-algebra of the product topology on $O_U= \prod_{x \in U}O_x$. These product spaces are also second-countable, locally compact, and Hausdorff as all three properties are preserved by finite products.
When $Y$ is a second-countable locally compact Hausdorff space and $\bm Y = \tuple{Y,\Borel(Y)}$,
the Riesz--Markov--Kakutani representation theorem \cite{kakutani} says that $\FSMeasures(\bm Y)$
is a concrete realisation of the
topological dual space of $C_0(Y)$,
the space of continuous real-valued functions on $Y$
that vanish at infinity.\footnote{A function $\fdec{f}{Y}{\R}$ on a locally compact space $Y$ is said to \emph{vanish at infinity} if the set $\setdef{y \in Y}{\|f(x)\|\geq\varepsilon}$ is compact for all $\varepsilon>0$.}
The duality is given by
$\tuple{\mu,f} \defeq \intg{\bm Y}{f}{\mu}$
for $\mu \in \FSMeasures(\bm Y)$ and $f \in C_0(Y)$.\footnotemark Note that when $Y$ is compact (as treated above), $C_0(Y) = C(Y)$ as every closed subspace of a compact space is compact.
\footnotetext{This theorem holds more generally for locally compact Hausdorff spaces if one considers only (finite signed) Radon measures, which are measures that play well with the underlying topology. However, second-countability, together  with local compactness and Hausdorffness, guarantees that every Borel measure is Radon \cite[Theorem 7.8]{Folland}.}

Next, we show that we can approximate the linear program \refprog{LP-CF}\footnotemark\ by a slightly modified linear program defined on the space of finite measures on a measurable compact subspace of $\bm O_X$. 
The idea is to approximate to any desired error the mass of a finite measure on a locally compact set by the mass of the same measure on a compact subset. This naturally comes from the notion of \textit{tightness} of a measure.

\footnotetext{Here we will still use the form of the program \refprog{LP-CF} though throughout this subsection one has to keep in mind that it is defined over finite-signed measures on a \textit{locally compact} space rather than a compact space.}

\begin{definition}[tightness of a measure]
A measure $\mu$ on a metric space $U$ is said to be \textit{tight} if for each $\varepsilon > 0$ there exists a compact set $U_{\varepsilon} \subseteq U$ such that $\mu(U \setminus U_{\varepsilon}) < \varepsilon$.
\end{definition}

\noindent Then we need to argue that every measure we will consider is tight. This is a result of the following theorem.

\begin{theorem}[\cite{parthasarathyprobability1967}]
If $S$ is a complete separable metric space, then every finite measure on $S$ is tight.
\label{th:tightness}
\end{theorem}

\noindent For $x \in X$, $\bm O_x$ is a second-countable locally compact Hausdorff space, thus a Polish space, \ie a separable completely metrisable topological space. For this reason, the above theorem applies. We are now ready to state and prove the main theorem of this subsection.

\begin{theorem}
The linear program \refprog{LP-CF} defined over finite-signed measures on a locally compact space can be approximated to any desired precision $\varepsilon$ by a linear program $(\text{\upshape P-CF}^{\text{\upshape CV},\varepsilon})$ defined over finite signed measures on a compact space.
\label{th:approximate_local}
\end{theorem}

\begin{proof} 
Fix $\varepsilon > 0$. Let $C \in \Mc$ be a given context and $x \in C$ a given measurement label within that context. 
Because $e_C$ is a probability measure on $O_C$, the marginal measure $e_C|_{\{x \}}$ is a finite measure on $O_x$. 
Following Theorem~\ref{th:tightness}, $e_C|_{\{x \}}$ is tight and there exists a compact subset $K_x^{\varepsilon,C} \subseteq O_x$ such that: $e_C|_{\{x \}}(O_x \setminus K_x^{\varepsilon,C}) \leq \varepsilon$. 
Importantly there exist proofs that explicitly construct the approximating sets $K_x^{\varepsilon,C}$ (see \cite{Orbanz2011ProbabilityTI}) based on the separability of the underlying spaces. 
It makes this construction feasible in practice and justifies this approach.

We apply this procedure for every context and for all measurements in a context. 
We now define the compact set
\begin{equation*}
O_x^{\varepsilon} \defeq \bigcup\limits_{C \ni x} K_x^{\varepsilon,C}.    
\end{equation*}
The previous definition is essential to ensure a noncontextual cut-off of the outcome set which ensures the good definition of a compact subset for each measurement label independent of the context. For some subset of measurement labels $U \subseteq X$, we define the compact set $O_U^{\varepsilon} \defeq \prod_{x \in U} O_x^{\varepsilon}$. For every context $C \in \Mc$ and for every measurement label $x \in C$, we now have that $K_x^{\varepsilon,C} \subseteq O_x^{\varepsilon}$ and thus $e_C|_{\{x \}}(O_x \setminus O_x^{\varepsilon}) \leq \varepsilon$. Note that due to the compatibility condition, we can write $e_C|_{\{x\}}$ as $e_{\{x\}}$ for any context. 

Let $\mu$ be any feasible solution of \refprog{LP-CF} defined over finite-signed measures on a locally compact space. 
Due to the constraints of \refprog{LP-CF} we have that $\forall x \in X, \, \mu|_{\{x\}} \leq e_{\{x\}}$.
Then:
\begin{align*}
    \mu(O_X \setminus O^{\varepsilon}_X) & = \mu \left( \prod_{x \in X } O_{x} \setminus \prod_{x \in X } O_{x}^{\varepsilon}  \right) \\
    & = \mu \left( \prod_{x \in X } ( O_{x} \setminus O_{x}^{\varepsilon} ) \right) \\
    & = \prod_{x \in X} \mu|_{\{x\}} (O_x \setminus O_x^\varepsilon) \\
    & \leq \prod_{x \in X} e_{\{x\}} (O_x \setminus O_x^\varepsilon) \\
    & \leq \varepsilon^{\vert X \vert} \Mdot
\end{align*}

We now define the linear program $(\text{P-CF}^{\text{CV},\varepsilon})$ which has the same form as \refprog{LP-CF} though the unknown measures are taken from $\FSMeasures(\bm O_X^{\varepsilon})$ where $\bm O_X^{\varepsilon} = \tuple{O_X^{\varepsilon},\Borel(O_X^{\varepsilon})}$. 
We would like to state that $(\text{P-CF}^{\text{CV},\varepsilon})$ approximates \refprog{LP-CF} up to $\varepsilon$; \ie that their values are $\varepsilon$-close. 
The missing ingredient from the previous chain of inequalities is that given an optimal measure $\mu^*$ satisfying \refprog{LP-CF}, we do not know whether an optimal solution $\mu^*_{\varepsilon}$ of $(\text{P-CF}^{\text{CV},\varepsilon})$ is necessarily the restriction of $\mu^*$ to $O_X^{\varepsilon}$. In fact, it is possible that we do not even have a unique optimal solution. However we only need to prove that they have the same mass on $O_X^\varepsilon$, \ie $\mu_{\varepsilon}^*(O_X^{\varepsilon}) = \mu^*|_{O_X^{\varepsilon}}(O_X^{\varepsilon}) $.
For a contradiction, suppose this does not hold. Then because $\mu^*_{\varepsilon}$ is an optimal value of $(\text{P-CF}^{\text{CV},\varepsilon})$, we must have $\mu_{\varepsilon}^*(O_X^{\varepsilon}) > \mu^*|_{O_X^{\varepsilon}}(O_X^{\varepsilon}) $. From this we construct a new measure $\tilde{\mu}$ on $\bm O_X$ which equals $\mu^*_{\varepsilon}$ on $\bm O_X^{\varepsilon}$ and $\mu^*$ on $\bm O_X \setminus O_X^{\varepsilon}$. It satisfies all constraints and furthermore $\tilde{\mu}(O_X) > \mu^*(O_X)$. This contradicts the fact that $\mu^*$ is an optimal solution of \refprog{LP-CF}. Thus necessarily $\mu_{\varepsilon}^*(O_X^{\varepsilon}) = \mu^*|_{O_X^{\varepsilon}}(O_X^{\varepsilon})$.

The linear program $(\text{P-CF}^{\text{CV},\varepsilon})$ defined on a compact space has indeed a value $\varepsilon$-close to the original program \refprog{LP-CF}.
\end{proof}

In conclusion to this section, we can approximate the problem of finding the noncontextual fraction in measurement scenarios whose outcome spaces are locally compact by the same problem defined on compact subspaces. It thus suffices to restrict the study to the case of compact outcome spaces. 
    
    \section{Continuous generalisation of Bell inequalities}\label{sec:bellinequality}
        The dual program \refprog{DLP-CF} is of particular interest in its own right.
As we now show, it can essentially be understood as
computing a continuous-variable
`Bell inequality' that is optimised to the empirical model.
Making the change of variables $\beta_C \defeq |\Mc|^{-1} \mathbf{1}_{O_C}-f_C$ for each
$C \in \Mc$, the dual program \refprog{DLP-CF} transforms to the following.
\leqnomode
\begin{flalign*}
    \label{prog:B-CF}
    \tag*{(B-CF)}
    \hspace{3cm}\left\{
    \begin{aligned}
        & \quad \text{Find } \family{\beta_C}_{C \in \Mc} \in \prod\limits_{C \in \Mc} C(O_C) \\
        & \quad \text{maximising } \sum_{C \in \Mc} \intg{O_C}{\beta_C}{e_C} \\
        & \quad \text{subject to:}  \\
        & \hspace{1cm} \begin{aligned}
            & \sum_{C \in \Mc} \beta_C \circ \rho^X_C \;\leq\; \mathbf{0}_{O_X} \\
            & \forall C \in \Mc,\; \beta_C \;\leq\; \vert \Mc \vert^{-1} \mathbf{1}_{O_C} \Mdot
        \end{aligned}
    \end{aligned}
    \right. &&
\end{flalign*}
\reqnomode
This program directly computes the contextual fraction $\CF(e)$ instead of the noncontextual fraction. 
It maximises, subject to constraints,
the total value obtained by integrating
these functionals context-wise against the empirical model in question.
The first set of constraints---a generalisation of a system of linear inequalities determining a Bell inequality ---
ensures that, for noncontextual empirical models, the value of the program is
at most $0$, since any such model extends to a measure $\mu$ on $\bm O_X$ such
that $\mu(O_X) = 1$.
The final set of constraints acts as a normalisation condition on the
value of the program, ensuring that it takes values in the interval $[0,1]$ for any empirical model.
Any family of functions $\beta = (\beta_C) \in F_2$ satisfying the constraints will thus result in what can be regarded as a
generalised Bell inequality,
\[
    \sum_{C \in \Mc} \intg{O_C}{\beta_C}{e_C} \leq 0 \, ,
\]
which is satisfied by all noncontextual empirical models.

\begin{definition}
A \emph{form} $\mathbf{\beta}$ on a measurement scenario $\XMOO$ is a family $\mathbf{\beta} = (\beta_C)_{C \in \Mc}$ of functions $\beta_C \in C(O_C)$ for all $C \in \Mc$.
Given an empirical model $e$ on $\XMOO$,
the \emph{value} of $\mathbf{\beta}$ on $e$ is 
\[\langle \beta, e \rangle_{_2} \defeq \sum_{C \in \Mc} \intg{O_C}{\beta_C}{e_C} \Mdot\footnotemark\]
The norm of $\mathbf{\beta}$
is given by
\[\|\beta\| \defeq \sum_{C \in \Mc}\|\beta_C\| = \sum_{C\in\Mc}\sup \setdef{\beta_C(\veco)}{\veco\in O_C}\Mdot\]
\end{definition}

\begin{definition}
An \emph{inequality} $(\mathbf{\beta},R)$ on a measurement scenario $\XMOO$ is a form $\mathbf{\beta}$ together with a bound $R \in \R$.
An empirical model $e$ is said to satisfy the inequality if
the value of $\mathbf{\beta}$ on $e$ is below the bound, 
\ie $\langle \mathbf{\beta}, e \rangle_{_2} \leq R$.
\end{definition}

\begin{definition}
An inequality $(\mathbf{\beta},R)$ is said to be a \emph{generalised Bell inequality} if it is satisfied by all noncontextual empirical models, \ie if for any noncontextual model $d$ on $\XMOO$, it holds that 
$\langle \mathbf{\beta}, d \rangle_{_2} \leq R$.
\end{definition}
\noindent A generalised Bell inequality $(\beta,R)$ establishes a bound $\langle e,\beta \rangle_{2}$ amongst noncontextual models $e$. For more general models, the value of $\beta$ on $e$ is only limited by the algebraic bound $\norm{\beta}$.
In the following, we will only consider inequalities $(\beta,R)$ for which $R < \norm{\beta}$ excluding inequalities trivially satisfied by all empirical models.

\begin{definition}
The \emph{normalised violation} of a generalised Bell inequality $(\mathbf{\beta},R)$ by an empirical model $e$ is
\[\frac{\max\enset{0,\langle \beta, e \rangle_{_2} - R}}{\|\beta\|-R} \Mcomma\]
the amount by which its value $\langle \beta, e \rangle_{_2}$ exceeds the bound $R$ normalised by the maximal `algebraic' violation.
\end{definition}

\footnotetext{The notation $\langle \cdot,\cdot \rangle_{_2}$ is further discussed and explained to be a canonical duality in Appendix~\ref{sec:appendix_std_form}.}

The above definition restricts to the usual notions of Bell inequality and noncontextual inequality in the discrete-variable case
and is particularly close to the presentation in \cite{abramsky2017contextual}.
The following theorem also generalises to continuous variables the main result of \cite{abramsky2017contextual}.

\begin{theorem}
Let e be an empirical model.
\begin{enumerate*}[label=(\roman*)]
\item
The normalised violation by $e$ of any generalised Bell inequality is at most $\CF(e)$;
\item \label{it:ch02_BIth}
if $\CF(e) > 0$ then for every $\varepsilon > 0$ there exists a generalised Bell inequality whose normalised violation by $e$ is at least $CF(e)-\varepsilon$.
\end{enumerate*}
\end{theorem}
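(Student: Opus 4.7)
The plan is to prove (i) by a direct primal argument using (P), and (ii) by invoking strong duality between (P) and (D) (equivalently, between (P) and (B), giving optimal value $\CF(e)$).

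For (i), let $(\beta, R)$ be a generalised Bell inequality. The first observation is that, applying the defining inequality $\langle \beta, e' \rangle_2 \leq R$ to the (noncontextual) empirical model induced by a Dirac measure $\delta_g$ for each $g \in O_X$, one obtains the pointwise inequality $\sum_{C \in \Mc} \beta_C \circ \rho^X_C \leq R$ on $O_X$. Now let $\mu$ be any feasible solution of (P), with $\lambda := \mu(O_X) \leq \NCF(e)$. Since $\mu|_C \leq e_C$ as measures, the difference $\nu_C := e_C - \mu|_C$ is a positive measure of total mass $1 - \lambda$. The change-of-variables formula, Eq.~\eqref{eq:changeofvariables}, gives $\intg{\bfO_C}{\beta_C}{\mu|_C} = \intg{\bfO_X}{\beta_C \circ \rho^X_C}{\mu}$ for each $C$, so splitting integrals yields
\[
\langle \beta, e \rangle_2 \;=\; \intg{\bfO_X}{\textstyle\sum_{C \in \Mc} \beta_C \circ \rho^X_C}{\mu} + \sum_{C \in \Mc} \intg{\bfO_C}{\beta_C}{\nu_C} \;\leq\; R\lambda + \|\beta\|\,(1 - \lambda).
\]
Rearranging gives $\langle \beta, e \rangle_2 - R \leq (1 - \lambda)(\|\beta\| - R)$, and taking the supremum over $\lambda$ upper-bounds this quantity by $\CF(e)\,(\|\beta\| - R)$, which is the desired normalised-violation bound.

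For (ii), the crux is strong duality $\sup (\text{P}) = \inf (\text{D})$, from which the bijective affine change of variables $\beta_C := |\Mc|^{-1}\mathbf{1} - f_C$ transforms (D) into (B) and yields $\sup (\text{B}) = 1 - \NCF(e) = \CF(e)$. For any $\epsilon > 0$, I would pick a feasible $\beta$ for (B) with $\langle \beta, e \rangle_2 \geq \CF(e) - \epsilon$. The constraint $\sum_C \beta_C \circ \rho^X_C \leq 0$ on $O_X$ implies, by integrating against any probability measure $\mu'$ on $\bfO_X$ that witnesses extendability of a noncontextual $e'$, that $\langle \beta, e' \rangle_2 \leq 0$; hence $(\beta, 0)$ is a generalised Bell inequality. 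The constraint $\beta_C \leq |\Mc|^{-1}\mathbf{1}$ gives $\|\beta\| = \sum_{C} \sup \beta_C \leq 1$, so the normalised violation of $(\beta, 0)$ is at least $\langle \beta, e \rangle_2 / \|\beta\| \geq \CF(e) - \epsilon$.

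The main obstacle is establishing strong duality for these infinite linear programs, since in infinite dimensions feasibility alone does not preclude a duality gap. Here the standing assumptions on the outcome spaces --- second-countable, locally compact, and Hausdorff --- are essential: via the Riesz--Markov--Kakutani theorem they identify $\FSMeasures(\bfO_C)$ with the topological dual of $C_0(O_C,\R)$, setting up a canonical dual pairing, and permit use of weak-$*$ compactness via Banach--Alaoglu on norm-bounded sets. My plan would be to cast (P) and (D) in the standard conic form of infinite linear programming as in Appendix~\ref{sec:appendix_std_form}, then appeal to a no-duality-gap theorem (e.g.\ from Barvinok~\cite{barvinok_02}), verifying that the image of the primal constraint map is closed in an appropriate weak topology and that a Slater-type interior point exists on the dual side.
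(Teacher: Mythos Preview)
Your proposal is correct and follows essentially the same route as the paper: the paper's proof is the single sentence ``follows directly from the definitions of the linear programs, and from strong duality (Proposition~\ref{prop:zero_duality})'', and you have simply fleshed out what that sentence means---a direct primal/feasibility argument for (i) and the change of variables from (D) to (B) plus strong duality for (ii). Your final paragraph on establishing strong duality via the standard conic form and a Barvinok-type closedness criterion with Banach--Alaoglu is exactly the content of Proposition~\ref{prop:zero_duality} and its proof in Appendix~\ref{sec:appendix_proof_zeroduality}, so you may simply cite that result rather than reprove it.
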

\begin{proof}
The proof follows directly from the definitions of the linear programs, and from strong duality, \ie the fact that their optimal values coincide (Proposition \ref{prop:zero_duality} below).
\end{proof}

Item \ref{it:ch02_BIth} is slightly modified compared to the discrete analogue because there is no guarantee that there exists an optimal solution for the dual program \refprog{DLP-CF}. In particular, its optimal value might be achieved by a discontinuous function that can be approximated by continuous ones. Hence the modification of \ref{it:ch02_BIth} with a normalised violation $\varepsilon$-close to $\CF(e)$.

    \section{Approximating the contextual fraction with SDPs}\label{sec:sdp}
        In Section \ref{sec:quantifying}, we presented the problem of computing the noncontextual fraction as an infinite linear program.
Although this is of theoretical importance, it does not allow one to directly perform the actual numerical computation of this quantity.
Here we exploit the link between measures and their sequence of moments to derive a hierarchy of truncated finite-dimensional semidefinite programs which are a relaxation of the original primal problem \refprog{LP-CF}. Dual to this vision, we can equivalently exploit the link between positive polynomials and their sum-of-squares representation to derive a hierarchy of semidefinite programs which are a restriction of the dual problem \refprog{DLP-CF}. 
We further prove that the optimal values of the truncated programs
converge monotonically to the noncontextual fraction.
This makes use of global optimisation techniques developed by Lasserre and Parrilo \cite{lasserre10,parrilo2003semidefinite} and further developed in \cite{lasserre2011new}. 
We introduce them in Appendix~\ref{sec:appendix_Lasserrehierarchy} and we strongly recommend reading this appendix to readers unfamiliar with these notions. 
We will use the same notation throughout this section. 
Another extensive and well-presented reference on the subject is \cite{Laurent2009}.
We start by deriving a hierarchy of SDPs to approximate the contextual fraction and then show that it provides a sequence of optimal values that converge to the noncontextual fraction.

\paragraph{Notation and terminology}
We first fix some notation that is also used in Appendix~\ref{sec:appendix_Lasserrehierarchy}. Let $\Rpolm$ denote the ring of real polynomials in the variables $\bm x \in \R^d$, and let $\Rpolkm \subset \Rpolm$ contain those polynomials of total degree at most $k$.
The latter forms a vector space of dimension $s(k) \defeq \binom{d+k}{k}$,
with a canonical basis consisting of monomials
$\bm x^{\bm \alpha} = x_1^{\alpha_1}\cdots x_d^{\alpha_d}$
indexed by the set $\N^d_k \defeq \setdef{\bm \alpha \in \N^d}{\lvert \bm \alpha \rvert \leq k}$ where $\lvert\bm \alpha\rvert\defeq\sum_{i=1}^d\alpha_i$.
Any $ p \in \Rpolkm$
can be expanded in this basis 
as $p(\bm x) = \sum_{\bm \alpha \in \N^d_k} p_{\alpha}\bm x^{\bm \alpha}$ and we write
$\pv \defeq (p_{\bm \alpha}) \in \R^{s(k)}$ for the resulting vector of coefficients.

\subsection{Hierarchy of semidefinite relaxations for computing \texorpdfstring{$\NCF (e)$}{NCF(e)}}

We fix a measurement scenario $\tuple{X,\Mc,\bm O}$ and an empirical model $e$ on this scenario. 
We will restrict our attention to outcome spaces of the form detailed in Section~\ref{subsec:assumption1}.
Let $d = \vert X \vert \in \N_{>0}$ so that $O_X$ is a Borel subset of $\R^d$.
As a prerequisite, we first need to compute the sequences of moments associated to measures $(e_C)_{C \in \Mc}$ derived from the empirical model. For $C \in \Mc$, let $\bm y^{e,C} = (y^{e,C}_{\bm \alpha})_{\bm \alpha \in \N^d}$ be the sequence of all moments of $e_C$. 
For a given $k \in \N$, which will fix the level of the hierarchy, we only need to compute a finite number $s(k)$ of moments for all contexts. These will be the inputs to the program. 

Below, we derive a hierarchy of SDP relaxations for the primal program \refprog{LP-CF} such that their optimal values converge monotonically to $\val{\text{P-CF}} = \NCF(e)$. We start by discussing the assumptions we have to make on the outcome space. Then we derive the hierarchy based first on the primal program and then on the dual program, and we further show that these formulations are indeed dual. Finally, we prove convergence of the hierarchy.

\subsubsection*{Further assumptions on the outcome space?}
We already made the assumptions mentioned in Section~\ref{subsec:assumption1} for the outcome spaces $\bm O = (\bm O_x)_{x \in X}$, noting that they are not restrictive when considering actual applications. However we would like to meet the assumptions detailed in Assumptions~\ref{ass:algebraic} for the global outcome space $O_X$ so that both Theorems~\ref{th:putinar} and \ref{th:representingmeasure} apply in our setting (see Appendix~\ref{sec:appendix_Lasserrehierarchy}). 

Assumption~\ref{ass:algebraic}~\ref{ass:compact} is already met because we have assumed that for all $x \in X$, $O_x \subset \R$ is compact. Recall that the more general case of $O_x$ locally compact can be reduced to the compact case, as seen in Section~\ref{sec:localcompact}.

Let us discuss Assumption~\ref{ass:algebraic}~\ref{ass:semialgebraic}. We have that $O_X = \prod_{x \in X} O_x$ with $O_x \subset \R$ compact. If $O_x$ is disconnected, we can always complete it into a connected space by attributing measure zero to the added parts for all measures $e_C$ whenever $x \in C$. Then because $O_x$ is compact, it is bounded and it can be described by two constant polynomials: there exists $a_x,b_x \in \R$ such that $O_x = \left[a_x,b_x\right]$. This makes $O_X$ a polytope so in particular, it is semi-algebraic. We write it as 
\[
O_X = \setdef{\bm x \in \R^d}{\forall j = 1,\dots,m, \; g_j(\bm x) \geq 0}
\]
for some polynomials $g_j \in \Rpolm$ of degree 1.

As noted in \cite{lasserre10}, Assumption~\ref{ass:algebraic}~\ref{ass:archimidean} is not very restrictive. For instance, it is satisfied when the set is a polytope. This is the case for $O_X$.

Thus there is no need for further assumptions beyond those already assumed in Section~\ref{subsec:assumption1} in order to apply the results presented in Appendix~\ref{sec:appendix_Lasserrehierarchy}. 

\subsubsection*{Relaxation of the primal program} 

The program \refprog{LP-CF} can be relaxed so that a converging hierarchy of SDPs can be derived. The program \refprog{LP-CF} is essentially a maximisation problem on finite-signed Borel measures with additional constraints such as the fact that these are proper measures (\ie they are nonnegative). We will represent a measure by its moment sequence and use conditions for which this moment sequence has a (unique) representing Borel measure (see Appendix~\ref{subsec:moment}). 
We recall the expression of the primal program \refprog{LP-CF}:
\leqnomode
\begin{flalign*}
    \tag*{(P-CF)}
    \hspace{2.5cm}\left\{
    \begin{aligned}
        & \quad \text{Find } \mu \in \FSMeasures(\bm O_X) \\
        & \quad \text{maximising } \mu(O_X) \\
        & \quad \text{subject to:}  \\
        &  \hspace{1cm} \begin{aligned}
            & \forall C \in \Mc,\; \mu|_C \;\leq\; e_C \\
            & \mu \;\geq\; 0 \Mdot
        \end{aligned}
    \end{aligned}
    \right. &&
\end{flalign*}
\reqnomode
From Appendix~\ref{subsec:moment} which culminates at Theorem~\ref{th:representingmeasure}, it can be relaxed for $k\in \N_{>0}$ as:
\leqnomode
\begin{flalign*}
    \label{prog:SDPk-CF}
    \tag*{(SDP-CF$^k$)}
    \hspace{2.5cm}\left\{
    \begin{aligned}
        & \quad \text{Find } \bm y \in \R^{s(2k)} \\
        & \quad \text{maximising } y_{\bm 0} \\
        & \quad \text{subject to:}  \\
        & \hspace{1cm} \begin{aligned}
            & \forall C \in \Mc,\; M_k(\bm y^{e,C} - \bm y|_C ) \succeq 0, \\
            & \forall j=1,\dots,m,\; M_{k-1}(g_j \bm y) \succeq 0, \\
            & M_k(\bm y) \succeq 0 \Mdot
        \end{aligned}
    \end{aligned}
    \right. &&
\end{flalign*}
\reqnomode
The moment matrices $M_k(\bm y)$ and the localising matrices $M_{k-1}(g_j \bm y)$ are defined in Appendix~\ref{sec:appendix_Lasserrehierarchy}.
We consider localising matrices of order $k-1$ rather than $k$ because all $g_j$'s are of degree exactly 1. In this way, the maximum degree matches with that of the moment matrices. In general we have to deal with localising matrices of order $k- \lceil \frac{\text{deg}(g_j)} 2 \rceil$. If $\mu$ is a representing measure on $\bm O_X$ for $\bm y$ then for all contexts $C \in \Mc$, $\bm y|_C$ can be defined through $\bm y$ by requiring that $\bm y|_C$ has representing measure $\mu|_C$. The two last constraints state necessary conditions on the variable $\bm y$ to be moments of some finite Borel measure supported on $\bm O_X$. The first constraint is a relaxation of the constraint $\mu|_C \leq e_C$ for $C \in \Mc$. As expected, \refprog{SDPk-CF} is a semidefinite relaxation of the problem \refprog{LP-CF} so that $\forall k \in \N_{>0}$, $\NCF(e) = \val{\text{P-CF}} \leq \val{\text{SDP-CF}^{k}} $. Moreover $(\val{\text{SDP-CF}^{k}})_k$ is a monotone nonincreasing sequence because more constraints are added as $k$ increases (so that the relaxations are tighter and tighter).

\subsubsection*{Restriction of the dual program}

The program \refprog{DLP-CF} can be restricted so that we can derive a converging hierarchy of SDPs. It is essentially the minimisation of continuous functions for which we require additional constraints such as the fact that they are nonnegative. We will exploit the link between positive polynomials and sum-of-squares representation that is presented in Appendix~\ref{subsec:sos}. 
We recall the expression of the dual program \refprog{DLP-CF}:
\leqnomode
\begin{flalign*}
    \tag*{(D-CF)}
    \hspace{2.5cm}\left\{
    \begin{aligned}
        & \quad \text{Find } \family{f_C}_{C \in \Mc} \in \prod_{C \in \Mc} C(O_C) \\
        & \quad \text{minimising } \intg{O_C}{f_C}{e_C} \\
        & \quad \text{subject to:}  \\
        & \hspace{1cm} \begin{aligned}
            & \sum_{C \in \Mc} f_C \circ \rho^X_C \;\geq\; \mathbf{1}_{O_X} \\
            & \forall C \in \Mc,\; f_C \;\geq\; \mathbf{0}_{O_C} \Mdot
        \end{aligned}
    \end{aligned}
    \right. &&
\end{flalign*}
\reqnomode
As this point we could derive the dual of program \refprog{SDPk-CF} and show that this is indeed a restriction of the above program. For a more symmetric treatment, we restrict the dual program building on Appendix~\ref{subsec:sos} and Theorem~\ref{th:putinar}. Instead of optimising over positive continuous functions, we restrict them to belong to the quadratic module $Q(g)$ and then further to $Q_k(g)$ for some $k \in \N_{>0}$. This requires that the degrees of SOS polynomials are fixed. 
For $k \in \N_{>0}$, we have
\leqnomode
\begin{flalign*}
    \label{prog:DSDPk-CF}
    \tag*{(DSDP-CF$^{k}$)}
    \hspace{2.5cm}\left\{
    \begin{aligned}
        & \quad \text{Find } (p_C)_{C \in \Mc} \subset \SOSkm \text{ and } \family{\sigma_j}_{j=1,\dots,m} \subset \SOSm_{k-1} \hspace{-2cm}\\
        & \quad \text{maximising } \sum_{C \in \Mc} \intg{O_C}{p_C}{e_C} \\
        & \quad \text{subject to:}  \\
        &  \hspace{1cm} \begin{aligned}
            & \sum_{C \in \Mc} p_C \circ \rho^X_C  - \mathbf{1}_{O_X} = \sum_{j=0}^{m} \sigma_j g_j \Mdot
        \end{aligned}
    \end{aligned}
    \right. &&
\end{flalign*}
\reqnomode
\refprog{DSDPk-CF} is a restriction of \refprog{DLP-CF} so that for all $k \in \N_{> 0}$, we have that $\NCF(e) = \val{\text{D-CF}} \leq \val{\text{SDP-CF}^{k}} $. Furthermore, $(\val{\text{SDP-CF}^{k}})_k$ is a monotone nonincreasing sequence.

Problems \refprog{SDPk-CF} and \refprog{DSDPk-CF} are indeed dual programs (see Proposition~\ref{prop:weakduality} in Appendix~\ref{sec:appendix_SDP_duality}).

\subsection{Convergence of the hierarchy of SDPs}

Finally, we prove that the constructed hierarchy provides a sequence of objective values that converges to the noncontextual fraction $\NCF(e)$.

\begin{theorem}\label{th:ch2_SDPconvergence}\index{Contextual fraction!CV}
The optimal values of the hierarchy of semidefinite programs {\upshape\refprog{SDPk-CF}} (resp. {\upshape\refprog{DSDPk-CF}}) provide monotonically decreasing upper bounds converging to the noncontextual fraction $\NCF(e)$ which is the value of {\upshape\refprog{LP-CF}}. That is 
    {\upshape
    \begin{align*}
        \val{\text{SDP-CF}^\text{k}} \; \downarrow \; \val{\text{P-CF}} = \NCF(e) \quad &\text{ as } k \rightarrow \infty \, ,
    \\
        \val{\text{DSDP-CF}^\text{k}} \; \downarrow \; \val{\text{D-CF}} = \NCF(e) \quad &\text{ as } k \rightarrow \infty \Mdot
    \end{align*}
    }
\end{theorem}

\begin{proof}
Because of the strong duality between the original infinite-dimensional linear programs we have
\[
    \val{\text{P-CF}} = \val{\text{D-CF}} = \NCF(e).
\]
Moreover, for all $k \geq 1$, \refprog{SDPk-CF} is a relaxation of \refprog{LP-CF}:
\[
    \val{\text{SDP-CF}^\text{k}} \geq \val{\text{P-CF}}.
\]
And for all $k \geq 1$, \refprog{DSDPk-CF} is a restriction of \refprog{DLP-CF}:
\[
    \val{\text{DSDP-CF}^\text{k}} \geq \val{\text{D-CF}}.
\]
Also for all $k \geq 1$ we have weak duality between \refprog{SDPk-CF} and \refprog{DSDPk-CF} (see Proposition~\ref{prop:weakduality}):
\[
    \val{\text{DSDP-CF}^\text{k}} \geq \val{\text{SDP-CF}^\text{k}}.
\]
Thus for all $k \geq 1$:
\[
    \val{\text{DSDP-CF}^\text{k}} \geq \val{\text{SDP-CF}^\text{k}} \geq \NCF(e).
\]
We already saw that $(\val{\text{SDP-CF}^\text{k}})_k$ and $(\val{\text{DSDP-CF}^\text{k}})_k$ form monotone nonincreasing sequences. We now show that $(\val{\text{DSDP-CF}^\text{CV,k}})_k$ converges to $\NCF(e)$. This is equivalent to showing that we can approximate any feasible solution\footnotemark of program \refprog{DLP-CF} with a solution of \refprog{DSDPk-CF} for a high enough rank $k$. 
\footnotetext{Note that program \refprog{DLP-CF} might not have an optimal solution as it might only has an optimal solution in the closure of the feasible set. In that case, we can always find a sequence of feasible solutions converging to an optimal solution.}

Fix $\varepsilon > 0$ and any feasible solution $(f_C)_{C \in \Mc} \in \prod_{C \in \Mc} C(O_C)$ of \refprog{DLP-CF}. Then for all $C \in \Mc$, $f_C + \frac{\varepsilon}{\vert \Mc \vert}$ is a positive continuous function on $O_C$. Because $O_C$ is compact (see Subsection~\ref{subsec:assumption1}) by the Stone--Weierstrass theorem, $f_C +  \frac{\varepsilon}{\vert \Mc \vert}$ can be approximated by a positive polynomial. 
Thus there exist positive polynomials $p_C^{\varepsilon} \in \Rpolm$ such that for all contexts $C \in \Mc$  we have (in sup norm) that
\begin{equation}
    \norm{f_C + \frac{\varepsilon}{\vert \Mc \vert} - p_C^{\varepsilon}} \leq
    \frac{\varepsilon}{\vert \Mc \vert} 
    \label{eq:approx1}
\end{equation}
and also
\begin{equation}
    \norm{ \left(f_C + \frac{\varepsilon}{\vert \Mc \vert} - p_C^{\varepsilon} \right) \circ \rho^X_C } < \frac{1}{\vert \Mc \vert} \Min{\bm x \in O_X} \left( \sum_{C \in \Mc} \left(f_C + \frac{\varepsilon}{\vert \Mc \vert} \right) \circ \rho^X_C(\bm x) - \bm x \right) \Mcomma
    \label{eq:approx2}
\end{equation}
where the minimum is strictly positive as $\sum_C (f_C +\frac{\varepsilon}{\vert \Mc \vert}) \circ \rho^X_C > \bm 1_{O_X}  $.

From Eq.~(\ref{eq:approx1}), the objective derived with $(p_C^{\varepsilon})_C$ is $\varepsilon$-close to the original objective:
\begin{align*}
    \left| \sum_{C \in \Mc} \intg{O_C}{f_C}{e_C} - \sum_{C \in \Mc} \intg{O_C}{p_C^{\varepsilon}}{e_C}\right| & \leq \sum_{C \in \Mc}  \intg{O_C}{\left| f_C + \frac{\varepsilon}{\vert \Mc \vert} - p_C^{\varepsilon}\right|}{e_C} \\
    & \leq \varepsilon\Mdot
\end{align*}
Also from Eq.~(\ref{eq:approx2}):
\begin{align*}
    \sum_{C \in \Mc} & p_C^{\varepsilon} \circ \rho^X_C - \bm 1\\
    & > \sum_{C \in \Mc} \left(f_C + \frac{\varepsilon}{\vert \Mc \vert} \right) \circ \rho^X_C - \Min{\bm x \in O_X} \left( \sum_{C \in \Mc} \left(f_C + \frac{\varepsilon}{\vert \Mc \vert} \right) \circ \rho^X_C(\bm x) - \bm x \right) - \bm 1 \\
    & \geq 0,
\end{align*}
so that $\sum_{C \in \Mc} p_C^{\varepsilon} \circ \rho^X_C - \bm 1_{O_X}$ is a positive polynomial on $O_X$.
Next, because $O_X$ is of the form required in Assumption~\ref{ass:algebraic}, by Putinar's Positivellensatz\index{Putinar's Positivellensatz} (see Theorem~\ref{th:putinar}), $\sum_{C \in \Mc} p_C^{\varepsilon} \circ \rho^X_C - \bm 1_{O_X}$ belongs to the quadratic module $Q(g)$. Therefore, for a high enough rank $k \in \N$, it is a feasible solution of \refprog{DSDPk-CF} and thus
\[
    \lvert \NCF(e) - \val{\text{DSDP-CF}^\text{k}} \rvert \leq \varepsilon \Mdot
\]
\end{proof}

    \section*{Outlook}

\stress{Logical} forms of contextuality, which are present
at the level of the \stress{possibilistic} rather than \stress{probabilistic}
information contained in an empirical model, remain to be considered
(e.g.~\cite{fritz2009possibilistic,ab,abramsky2013relational,mansfield2012hardy}).
In the discrete setting, these can be treated by analysing
`possibilistic' empirical models obtained by considering the supports of the discrete-variable probability distributions \cite{ab}, which indicate the
elements of an outcome space that occur with non-zero probability.
In general, the notion of support of a measure is not as straightforward,
and the na\"ive approach is not viable since typically all singletons have measure $0$.
Nevertheless, supports can be defined in the setting of Borel measurable spaces,
for instance, which in any case
are the kind of spaces in which we are practically interested, in Sections \ref{sec:quantifying} and \ref{sec:sdp}.

Approaches to contextuality that characterise obstructions to global sections using cohomology have had some success
\cite{abramsky2012cohomology,abramsky2015contextuality,caru2015detecting,caru2017,raussendorf2016cohomological,roumen2017cohomology,okay2017topological,caru2018towards,okay2018cohomological}
and typically apply to logical forms of contextuality.
An interesting prospect is to explore how the present framework may be employed to these ends, and to see whether the continuous-variable setting can open the door to new techniques that can be applied, or whether qualitatively new forms of contextual behaviour may be uncovered.
A related direction to be developed is to understand how our treatment of contextuality can be further extended to continuous measurement spaces as proposed in \cite{cunha2019measures}.

Another direction to be explored is how our continuous-variable framework for contextuality can be extended to apply to more general notions of contextuality that relate not only to measurement contexts but also more broadly to contexts of preparations and transformations as well \cite{spekkens2005contextuality,mansfield2018quantum}, noting that these also admit quantifiable relationships to quantum advantage \cite{mansfield2018quantum,henaut2018tsirelson}.

Indeed, a major motivation to study contextuality is for its connections to quantum-over-classical advantages in informatic tasks.
An important line of questioning is to ask what further connections can be found in the continuous-variable setting, and whether continuous-variable contextuality might
offer advantages that outstrip those achievable with discrete-variable contextual resources.
Note that it is known that infinite-dimensional quantum systems can offer certain
additional advantages beyond finite-dimensional ones \cite{slofstra2016tsirelson}, though the empirical
model that arises in that example is still a discrete-variable one in our sense.

The present work sets the theoretical basis for computational exploration
of continuous-variable contextuality in quantum-mechanical empirical
models.
This, we hope, can provide new insights and inform other avenues to be developed in future work.
It can also be useful in verifying the non-classicality of empirical models.
Numerical implementation of the programs of Section \ref{sec:sdp} is of particular interest.
The hierarchy of semi-definite programs can be used numerically to witness contextuality in continuous-variable experiments.
Even if the time-complexity of the semi-definite program may increase drastically with its degree, a low-degree program can already provide a first witness of contextual behaviour. 

Since our framework for continuous-variable
contextuality is independent of quantum theory itself, it can equally
be applied to `empirical models' that arise in other, non-physical settings.
The discrete-variable framework of \cite{ab} has led to a number of
surprising connections and cross-fertilisations with other fields \cite{abramsky2015contextual},
including natural language \cite{abramsky2014semantic}, relational databases \cite{abramsky2012databases,barbosa2015contextuality}, logic \cite{abramsky2012logical,abramsky2015contextuality,kishida2016logic}, constraint satisfaction \cite{AbramskyGottlobKolaitis2013robust,abramsky2017quantum} and social systems \cite{dzhafarov2016there}.
It may be hoped that similar connections and applications can be found for the present framework to fields in which continuous-variable data is of
central importance.
For instance probability kernels of the kind we have used are also widely employed in machine learning (\eg \cite{hofmann2008kernel}), inviting intriguing questions about how our framework might be used or what advantages contextuality may confer in that setting.

	\vspace{22pt}
\begin{acknowledgements}

The authors thank Robert Booth, Ulysse Chabaud, Marcelo Terra Cunha, and Antoine Oustry for helpful comments and discussions. We also thank Robert Booth for pointing out that the proof of the FAB theorem could be extended straightforwardly to scenarios with an uncountable set of measurement labels.

This work was largely carried out while RSB was based at the Department of Computer Science, University of Oxford and partly at the School of Informatics, University of Edinburgh and at their current affiliation; while TD was based at the School of Informatics, University of Edinburgh; and while SM was at LIP6, Sorbonne Universit{\'e}. 

Financial support from the following is gratefully acknowledged: the Engineering and Physical Sciences Research Council (EPSRC), EP/N018745/1, `Contextuality as a Resource in Quantum Computation' (RSB); 
EPSRC, EP/R044759/1, `Combining Viewpoints in Quantum Theory (Ext.)' (RSB);
the Portuguese Foundation for Science and Technology (FCT -- Funda{\c{c}\~a}o para a Ci{\^e}ncia e a Tecnologia), CEECINST/00062/2018 (RSB);
and the European Union's Horizon 2020 Research and Innovation Programme under the Marie Sk{\l}odowska-Curie Grant Agreement No.~750523, `Resource Sensitive Quantum Computing' (SM).
\end{acknowledgements}

        \pagebreak
        \bibliographystyle{linksen}      
         \bibliography{cv_bib}   

	\newpage
	\section*{Appendices}\label{sec:appendix}
		\appendix

\section{Linear programs in standard form}
\label{sec:appendix_std_form}

This appendix may be of particular interest to readers familiar with global optimisation. We express the problems \refprog{LP-CF} and \refprog{DLP-CF} in the standard form of infinite linear programming \cite[IV--(7.1)]{barvinok_02}. We recall them below:

\leqnomode 
\begin{flalign*}
    \label{prog:LP-CF-re}
    \tag*{(P-CF)}
    \hspace{3cm}\left\{
    \begin{aligned}
        & \quad \text{Find } \mu \in \FSMeasures(\bm O_X) \\
        & \quad \text{maximising } \mu(O_X) \\
        & \quad \text{subject to:}  \\
        &  \hspace{1cm} \begin{aligned}
            & \forall C \in \Mc,\; \mu|_C \;\leq\; e_C \\
            & \mu \;\geq\; 0 \Mdot
        \end{aligned}
    \end{aligned}
    \right. &&
\end{flalign*}
\begin{flalign*}
    \label{prog:DLP-CF-re}
    \tag*{(D-CF)}
    \hspace{3cm}\left\{
    \begin{aligned}
        & \quad \text{Find } \family{f_C}_{C \in \Mc} \in \prod_{C \in \Mc} C(O_C) \\
        & \quad \text{minimising } \intg{O_C}{f_C}{e_C} \\
        & \quad \text{subject to:}  \\
        & \hspace{1cm} \begin{aligned}
            & \sum_{C \in \Mc} f_C \circ \rho^X_C \;\geq\; \mathbf{1}_{O_X} \\
            & \forall C \in \Mc,\; f_C \;\geq\; \mathbf{0}_{O_C} \Mdot
        \end{aligned}
    \end{aligned}
    \right. &&
\end{flalign*}
\reqnomode

Problems \refprog{LP-CF-re} and \refprog{DLP-CF-re} are indeed infinite linear programs as both the objective and the constraints are linear 
with respect to the unknown measure $\mu \in \FSMeasures(\bfO_X)$. To write \refprog{LP-CF-re} in the standard form \cite{barvinok_02}, we introduce the following spaces:

\begin{itemize}
    \item $\displaystyle E_1 \defeq \FSMeasures(\bm O_X)$.
    \item $\displaystyle F_1 \defeq C(O_X)$, the dual space of $E_1$.
    \item $\displaystyle E_2 \defeq \prod_{C \in \Mc} \FSMeasures(\bm O_C)$.
    \item $\displaystyle F_2 \defeq \prod_{C \in \Mc} C(O_C) $, the dual space of $E_2$.
\end{itemize}
The  dualities $\langle \dummy , \dummy \rangle_1 : E_1 \times F_1 \longrightarrow \R$ and $\langle \dummy , \dummy \rangle_2 : E_2 \times F_2 \longrightarrow \R$ are defined as follows:
\begin{align*}
    & \forall \, \mu \in E_1, \; f \in F_1, & & \langle \mu,f \rangle_1 \defeq \intg{O_X}{f}{\mu} \\
    & \forall \, (\nu_C) \in E_2, \; (f_C) \in F_2, & & \langle  (\nu_C),(f_C) \rangle_2 \defeq \sum_{C \in \Mc} \intg{O_C}{f_C}{\nu_C}\, ,
\end{align*}
where, for simplicity, we have omitted $C \in \Mc$ as a subscript for the families of functions.
We fix $K_1$ to be the convex cone of positive measures in $E_1 = \FSMeasures(\bm O_C)$ and $K_2$ to be the convex cone of families of positive measures in $E_2 = \prod_{C \in \Mc} \FSMeasures(\bm O_C)$. Then $K_1^*$ is the convex cone of positive function in $F_1 = C(O_X)$ and $K_2^*$ is the convex cone of families of positive functions in $F_2 = \prod_{C \in \Mc} C(O_C)$.

Let $\fdec{A}{E_1}{E_2}$ be the following linear transformation: for $\mu \in E_1$,
\begin{equation*}
 A(\mu) \defeq (\mu|_C)_{C \in \Mc} \in E_2 \Mdot
\end{equation*}
We also define the linear transformation $\fdec{A^*}{F_2}{F_1}$ as follows: for $(f_C) \in F_2$,
\begin{equation*}
    A^*((f_C)) \defeq \sum_{C \in \Mc} f_C \circ \rho^X_C \in F_1 \Mdot
\end{equation*}
We can verify that $A^*$ is the dual transformation of $A$: given $\mu \in E_1$ and $(f_C) \in F_2$, we have
\begin{align*}
\langle A(\mu), (f_C) \rangle_2 & = \langle  (\mu|_C), (f_C)  \rangle_2 \\
& = \sum_{C \in \Mc} \intg{O_C}{f_C}{\mu|_C} \\
& = \intg{O_X}{\sum_{C \in \Mc} f_C \circ \rho^X_C }{\mu} \\
& = \langle  \mu , \sum_{C \in \Mc} f_C \circ \rho^X_C  \rangle_1 \\
& = \langle  \mu, A^*((f_C))  \rangle_1 \Mdot
\end{align*}

Now fixing the vector function in the objective to be $c \defeq - \bm 1_{O_X} \in F_1$ and the vector in the constraints to be $b \defeq (-e_C)_{C \in \Mc} \in E_2$, the program \refprog{LP-CF-re} (resp. \refprog{DLP-CF-re})  can be expressed as in the standard form given in \cite{barvinok_02}:
\leqnomode
\begin{flalign*}
    \label{prog:LPstdform}
    \tag*{(LP)}
    \hspace{3cm} \left\{
    \begin{aligned}
            & \quad \text{Find } e_1 \in E_1 \\
            & \quad \text{minimising } \langle e_1,c \rangle_1 \\
            & \quad \text{subject to:}  \\
            & \hspace{1cm} \begin{aligned}
            & A(e_1) \geq_{K_2} b  \\
            & e_1 \geq_{K_1} 0 \Mdot
            \end{aligned} 
    \end{aligned}
    \right. &&
\end{flalign*}
\begin{flalign*}
    \label{prog:DLPstdform}
    \tag*{(D-LP)}
    \hspace{3cm}\left\{
    \begin{aligned}
        & \quad \text{Find } f_2 \in F_2 \\
        & \quad \text{maximising } \langle b,f_2 \rangle_2 \\
        & \quad \text{subject to:}  \\
        & \hspace{1cm} \begin{aligned}
        & A^*(f_2) \leq_{K_1^*} c \\
        & f_2 \geq_{K_2} 0 \Mdot
        \end{aligned}
    \end{aligned}
    \right. &&
\end{flalign*}
\reqnomode

\reqnomode
Note that the minus sign in the vectors $c$ and $b$ was added because we chose the primal program in the standard form to be a minimisation problem while the primal program \refprog{LP-CF-re} at hand is a maximisation problem. 

\section{Proof of Proposition \ref{prop:zero_duality}: zero duality gap}
\label{sec:appendix_proof_zeroduality}

In this appendix we give a full proof of Proposition \ref{prop:zero_duality}; \ie that strong duality holds between problems \refprog{LP-CF} and \refprog{DLP-CF}.
\begin{proof}
To show strong duality, we rely on 
\cite[Theorem 7.2]{barvinok_02}. Because $\mu_0= \bm 0_{\bm O_X}$---the measure that assigns $0$ to every measurable set of $\bm O_X$--- is a feasible solution for \refprog{LP-CF-re} and the noncontextual fraction lies between 0 and 1, \refprog{LP-CF-re} is consistent with finite value. Thus it suffices to show that the following cone
\[
    \Kc = \setdef{\left( A(\mu), \langle \mu,c \rangle_1 \right)}{\mu \in K_1} = \setdef{\left(\, (\mu|_C)_C, \mu(O_X)\, \right)}{\mu \in K_{1}} 
\]
is weakly closed in $E_2 \oplus \R$ (\ie closed in the weak topology of $K_1$) where we recall that $K_1$ is the convex cone of positive measures in $E_1 = \FSMeasures(\bm O_X)$.

We first notice that the linear transformation $A$ is a bounded linear operator and thus continuous. Boundedness comes from the fact that for all $\mu \in K_1$,
\begin{align}
\begin{split}
    \norm{A(\mu)}_{E_2} &= \norm{(\mu|_C)_C}_{E_2}  \\
    & = \sum_{C \in \Mc} \norm{\mu|_C}_{\FSMeasures(\bm O_C)}  \\
    & \leq \sum_{C \in \Mc}  \norm{\mu}_{E_1} \label{eq:boundedness}\\
    &  = \vert \Mc \vert \norm{\mu}_{E_1}  \Mcomma
    \end{split}
\end{align}
where we take the strong topology---\ie the norm induced by the total variation distance---on finite-signed measure spaces. It is defined as:
\begin{equation*}
    \norm{\mu}_{\FSMeasures(\bm U)} = \vert \mu \vert(U)\Mdot
\end{equation*}
We also equip the finite product space $E_2 = \prod_{C \in \Mc} \FSMeasures(\bm O_C)$ with the norm obtained by summing\footnotemark\ the individual total variation norms. The inequality in Eq.~(\ref{eq:boundedness}) is due to the fact that $\mu \in K_1$ so this is a positive measure and thus $\norm{\mu|_C}_{\FSMeasures(\bm O_C)} \leq \norm{\mu}_{E_1}$. This, of course, extends to the weak topology.

\footnotetext{Categorically, this is a coproduct.}
Secondly, we consider a sequence 
$(\mu^k)_{k \in \N}$ in $K_1$
and we want to show that the accumulation point $((\Theta_C)_C,\lambda) = \lim_{k \rightarrow \infty} \left( A(\mu^k), \langle \mu^k,c \rangle_1 \right)$ 
belongs to $\Kc$, where $\Theta = (\Theta_C)_C \in E_2$ and $\lambda \in \R$. 
If we consider the product of indicator functions $(\mathbf{1}_{O_C})_C \in F_2$ then 
$\langle A(\mu^k), (\mathbf{1}_{O_C}) \rangle_2 = \sum_{C \in \Mc} \mu|_C^k(O_C) \longrightarrow_{k} \sum_{C \in \Mc} \Theta_C(O_C) < \infty$ as  $\Theta_C$ is a finite measure for all maximal contexts $C \in \Mc$. Then because $\Mc$ is a covering family of $X$, $\forall k \in \N$, we have that $\mu^k(O_X) \leq \sum_{C \in \Mc} \mu^k|_C(O_C) < \infty$.
Since $(\mu^k) \in K_1^\N$ is a sequence of positive measures, this implies that $(\mu^k)$ is bounded.
Next, by weak-$*$ compactness of the unit ball (Alaoglu's theorem \cite{luenberger97}), there exists a subsequence $(\mu^{k_i})_{i}$ that converges weakly to an element $\omega \in K_1$. 
By continuity of $A$, we obtain that the accumulation point is such that $((\Theta_C)_C,\lambda) = \left( A(\omega), \langle \omega,c \rangle_1 \right) \in \Kc $.
\end{proof}


\section{The Lasserre--Parrilo hierarchy}
\label{sec:appendix_Lasserrehierarchy}

Below we introduce the Lasserre--Parrilo hierarchy for relaxing infinite-dimensional linear programs known as Generalised Moment Problems \cite{lasserre2001global,lasserre10,parrilo2003semidefinite}.
We start by giving insightful results: Subsection~\ref{subsec:sos} provides results concerning the representation of positive polynomials while Subsection~\ref{subsec:moment} provides results to understand when a sequence can be represented by a Borel measure.

\paragraph{Notation, terminology}
We work in $\R^d$ for $d \in \N^*$. We fix $\bm K$ to be a generic Borel measurable subspace of $\bm \R^d$. Below we fix some multi-index notations.
Let $\Rpolm$ denote the ring of real polynomials in the variables $\bm x \in \R^d$, and let $\Rpolkm \subset \Rpolm$ contain those polynomials of total degree at most $k$.
The latter forms a vector space of dimension $s(k) \defeq \binom{d+k}{k}$,
with a canonical basis consisting of monomials
$\bm x^{\bm \alpha} \defeq x_1^{\alpha_1}\cdots x_d^{\alpha_d}$
indexed by the set $\N^d_k \defeq \setdef{\bm \alpha \in \N^d}{\lvert \bm \alpha \rvert \leq k}$ where $\lvert\bm \alpha\rvert\defeq\sum_{i=1}^d\alpha_i$. 
For $k \in \N$, $\bm x \in \R^d$, we define $\bm \vrm_k(\bm x) \defeq (\bm x^{\bm \alpha})_{\vert \bm \alpha \vert \leq d} = (1,x_1,\dots,x_n,x_1^2,x_1x_2,\dots,x_n^k)^T $ the vector of monomials of total degree less or equal than $k$.

Any $ p \in \Rpolkm$
is associated with a vector of coefficients  $\bm p \defeq (p_{\bm \alpha}) \in \R^{s(k)}$ via expansion in the canonical basis 
as $p(\bm x) = \sum_{\bm \alpha \in \N^d_k} p_{\bm \alpha}\bm x^{\bm \alpha}$. 

\paragraph{Moment problem in probability}

Given a finite set of indices $\Gamma$, a set of reals $\enset{\gamma_j : j \in \Gamma}$ and functions $\fdec{h_j}{K}{\R}$, $j \in \Gamma$, that are integrable with respect to every measure $\mu \in \FSMeasures(\bm K)$, the corresponding \textit{Global Moment Problem} (GMP) can be expressed as:

\leqnomode
\begin{flalign*}
   \label{prog:GMPstdform}
    \tag*{(GMP)}
    \hspace{2cm}\left\{
        \begin{aligned}
            & \quad \text{Find } \mu \in\FSMeasures(\bm K) \\
            & \quad \text{maximising } \mu(\bm K) \\
            & \quad \text{subject to:}  \\
            & \hspace{1cm} \begin{aligned}
                &  \forall j \in \Gamma,\quad \intg{K}{h_j}{\mu} \leq \gamma_j \Mdot
            \end{aligned}
        \end{aligned}
    \right. &&
\end{flalign*}
\reqnomode
It dual program can be expressed as:
\leqnomode
\begin{flalign*}
   \label{prog:D-GMPstdform}
    \tag*{(D-GMP)}
    \hspace{2cm}\left\{
        \begin{aligned}
            & \quad \text{Find } \bm \lambda \in \R^\Gamma \\
            & \quad \text{minimising } \sum_{j \in \Gamma} \gamma_j \lambda_j \\
            & \quad \text{subject to:}  \\
            & \hspace{1cm} \begin{aligned}
                &  \forall \bm x \in K,\quad \sum_{j \in \Gamma} \lambda_j h_j(\bm x) - \bm x \geq 0 \\
                & \forall j \in \Gamma, \quad \lambda_j \geq 0 \Mdot
            \end{aligned}
        \end{aligned}
    \right. &&
\end{flalign*}
\reqnomode

\subsection{Positive polynomials and sum-of-squares}
\label{subsec:sos}

Here we present the link between positive polynomials and sum-of-squares representation so that we can derive a converging hierarchy of restriction problems for program~\refprog{D-GMPstdform}.

\begin{definition}
A polynomial $p \in \Rpolm$ is a \textit{sum-of-squares (SOS)} polynomial if there exists a finite family of polynomials $\family{q_i}_{i \in I}$ such that
$p = \sum_{i\in I} q_i^2$.
\end{definition}
\noindent SOS polynomials are widely used in convex optimisation.
We will denote by $\SOSm \subset \Rpolm$ the set of (multivariate) SOS polynomials, and $\SOSkm \subset \SOSm$ the set of SOS polynomials of degree at most $2k$. The following proposition hints towards the reason why it is desirable to be able to look for a sum-of-squares decomposition: it can be cast as a semidefinite optimisation problem.
\begin{proposition}[Prop. 2.1, \cite{lasserre10}]
\label{prop:sosdecomposition}
A polynomial $p \in \Rpolm_{2k}$ has a sum-of-squares decomposition if and only if there exists a real symmetric positive semidefinite matrix $Q \in \SymMatrices{s(k)}$ such that $\forall \bm x \in \R^d$, $p(\bm x) = \bm \vrm_k(\bm x)^T Q \bm \vrm_k(\bm x)$.
\end{proposition}
Then we will be looking at conditions under which a nonnegative polynomial can be expressed as a sum-of-squares polynomial. This is in essence the question raised by Hilbert in his 17$^\text{th}$ conjecture \cite{hilbert1888darstellung}. 
\begin{definition}
For a family $q = (q_j)_{j\in\enset{1, \ldots, m}}$ of polynomials, the set:
\[
    Q(q) \defeq  \setdef{ \sum_{j=0}^m \sigma_j q_j}{(\sigma_j)_{j\in\enset{0,\ldots, m}} \subset \SOSm}
\]
is a convex cone in $\Rpolm$ called the \textit{quadratic module} generated by the family $q$ with, for convenience, $q_0 = 1$ added. For $k \in \N$, we define $Q_k(q)$ to be the quadratic module $Q(q)$ where we further impose that $(\sigma_j)_{j\in\enset{0,\ldots, m}} \subset \SOSkm$ \ie we limit the degree of SOS polynomials.
\end{definition}

In the family of polynomials $(g_j)_{j\in\enset{1, \ldots, m}}$ we add $g_0 = 1$ for convenience. 

\begin{assumption} \label{ass:algebraic}
{\upshape Let $K \subset \R^d$. We make the following three assumptions on $K$.
\begin{enumerate}[label=(\roman*)]
\item \label{ass:semialgebraic}  Suppose $K$ is a basic semi-algebraic set \ie there exists a family of polynomials $g = (g_j)_{j\in\enset{1, \ldots, m}} \in \Rpolm^m$ of degrees deg($g_j$) respectively such that:
\[
    K \defeq \setdef{\bm x \in \R^d}{\forall j = 1,\dots,m, \; g_j(\bm x) \geq 0}.
\]
\item \label{ass:compact} Further suppose that $K$ is compact. 
\item \label{ass:archimidean} Finally suppose that there exists $u \in Q(q)$ such that the level set $\setdef{\bm x \in \R^d}{u(\bm x) \geq 0}$ is compact.
\end{enumerate}
}
\end{assumption}

The following theorem is the key result that we will exploit for deriving the hierarchy of SDP restrictions for the dual program~\refprog{GMPstdform}. 
\begin{theorem}[Putinar's Positivellensatz \cite{Putinar93}]
Let $K \subset \R^d$ satisfy Assumptions~\ref{ass:algebraic}. If $p \in \Rpolm$ is strictly positive on $K$ then $p \in Q(g)$, that is
\[
    p = \sum_{j=0}^m \sigma_j g_j  
\]
for some sum-of-squares polynomials $\sigma_j \in \SOSm$ for $j=0,1,\dots,m$.
\label{th:putinar}
\end{theorem}
\noindent A proof can also be found in \cite{Laurent2009}.

Using the results above and Assumption~\ref{ass:algebraic}, one can derive a hierarchy of SDPs \cite{lasserre10} which provide a converging sequence of optimal values towards the value of program~\refprog{D-GMPstdform}:

\leqnomode
\begin{flalign*}
   \label{prog:DSDP-GMP}
    \tag*{(D-GMP$^k$)}
    \hspace{2cm}\left\{
        \begin{aligned}
            & \quad \text{Find } \bm \lambda = (\lambda_j)_{j \in \Gamma} \in \R^\Gamma \text{ and } \forall j=0,\dots,m, \, f_j \in \SOSm_{k- \lceil \frac{\text{deg}(g_j) }{2}\rceil} \hspace{-3cm}\\
            & \quad \text{minimising } y_0 \\
            & \quad \text{subject to:}  \\
            & \hspace{1cm} \begin{aligned}
                &  \sum_{j \in \Gamma} \lambda_j h_j - \bm 1_{\bm K} = \sum_{j=0}^m f_j g_j  \\
                &  \forall j\in \Gamma, \quad \lambda_j \geq 0 \Mdot
            \end{aligned}
        \end{aligned}
    \right. &&
\end{flalign*}
\reqnomode

\subsection{Moment sequences and moment matrices}
\label{subsec:moment}

In this subsection, we want to understand why the program \refprog{LP-CF} can be relaxed so that a converging hierarchy of SDPs can be derived. The program \refprog{LP-CF} is essentially a maximisation problem on finite-signed Borel measures with additional constraints such as the fact that these are proper measures (\ie they are nonnegative). We will represent a measure by its moment sequence and find conditions for which this moment sequence has a (unique) representing Borel measure. 

\begin{definition}
Given a sequence $\bm y = (y_{\bm \alpha})_{\bm \alpha \in \N^d}\in \R^{\N^d}$,
we define the linear functional $\fdec{L_{\bm y}}{\Rpolm}{\R}$ by 
\[
L_{\bm y}(p) \defeq \sum_{\bm \alpha \in \N^d}p_{\bm \alpha} y_{\bm \alpha}.
\]
\label{def:RieszFunctional}
\end{definition}

\begin{definition}
Given a measure $\mu \in \Measures(\bm K)$, its \textit{moment sequence} $\bm y = (y_{\bm \alpha})_{\bm \alpha \in \N^d} \in \R^{\N^d}$ is given by
\begin{equation}
    y_{\bm \alpha} \defeq \intg{\bm K}{\bm x^{\bm \alpha}}{\mu(\bm x)} \Mdot
    \label{eq:ch0_momentseq}
\end{equation}
We say that $\bm y$ has a unique representing measure $\mu$ when there exists a unique $\mu$ such that Eq.~\eqref{eq:ch0_momentseq} holds. If $\mu$ is unique then we say it is determinate (\ie determined by its moments).
\end{definition}

\noindent The linear functional $L_{\bm y}$ then gives integration of polynomials with respect to $\mu$ \ie for any $p \in\Rpolm$:
\begin{align*}
L_{\bm y}(p) &=
\sum_{\bm \alpha \in \N^d}p_{\bm \alpha} y_{\bm \alpha}
=
\sum_{\bm \alpha \in \N^d}p_{\bm \alpha} \intg{\bm K}{\bm x^{\bm \alpha}}{\mu(\bm x)}
=
\intg{\bm K}{\sum_{\bm \alpha \in \N^d}p_{\bm \alpha}\bm x^{\bm \alpha}}{\mu(\bm x)} \\
&=
\intg{\bm K}{p(\bm x)}{\mu(\bm x)} \\
&=
\intg{\bm K}{p}{\mu},
\end{align*}
where we reversed summation and integration because the sum is finite since $p$ is a polynomial.

The following theorem is often used in optimisation theory over measures as it provides a necessary and sufficient condition for a sequence to have a representing measure. 
\begin{theorem}[Riesz-Haviland \cite{haviland1936momentum}]
Let $\bm y = (y_{\bm \alpha})_{\bm \alpha \in \N^d} \in \R^{\N^d}$ and suppose that $K \subseteq \R^d$ is closed. Then $\bm y$ has a representation (nonnegative) measure \ie there exists $\mu$ a measure on $K$ such that:
\begin{equation*}
    \forall \bm \alpha \in \N^d, \; \intg{K}{\bm x^{\bm \alpha}}{\mu} = y_{\bm \alpha}
\end{equation*}
if and only if $L_{\bm y}(p) \geq 0$ for all polynomials $p \in \Rpolm$ nonnegative on $K$.
\label{th:RieszHaviland}
\end{theorem}
\noindent We recall that for $k \in \N$, $s(k) = \binom {d+k}k$.
\begin{definition}
\label{def:momentmatrix}
For each $k \in \N$, the \textit{moment matrix} of order $k$, $M_k(\bm y) \in \SymMatrices{s(k)}$, of a truncated sequence $(y_{\bm \alpha})_{\bm \alpha \in \N^d_{2k}}$
is the $s(k) \times s(k)$ symmetric matrix with rows and columns indexed by $\N^d_k$ (\ie by the canonical basis for $\Rpolkm$) defined as follows: for any $\bm \alpha, \bm \beta \in \N^d_k$,
\[
    \left(M_k(\bm y)\right)_{\bm \alpha,\bm \beta} \defeq L_{\bm y}(\bm x^{\bm \alpha + \bm \beta}) = y_{\bm \alpha + \bm \beta} \Mdot
\]
\end{definition}

\begin{definition}
\label{def:localisingmatrix}
Given a polynomial $p \in \Rpolm$, the \textit{localising matrix} $M_k(p \bm y) \in \Matrices{s(k)}{\R}$ of a moment sequence $(y_{\bm \alpha})_{\bm \alpha \in \N^d} \in \R^{\N^d}$
is defined by: for all $\bm \alpha, \bm \beta \in \N^d_{k}$,
\[
     \left(M_k(p \bm y)\right)_{\bm \alpha, \bm \beta} \defeq  L_{\bm y}(p(x) x^{\bm \alpha + \bm \beta}) = \sum_{\gamma \in \N^d} p_{\gamma} y_{\bm \alpha + \bm \beta + \gamma} \Mdot
\]
\end{definition}

\noindent The localising matrix reduces to the moment matrix for $p=1$. For well-defined moment sequences, \ie sequences that have a representing finite Borel measure, moment matrices and localising matrices are positive semidefinite, which provides insight on the reason why problem \refprog{LP-CF} can be relaxed to a problem with positive semidefiniteness constraints. 

\begin{proposition}
Let $\bm y = (y_{\bm \alpha})_{\bm \alpha \in \N^d} \in \R^{\N^d}$ be a sequence of moments for some finite Borel measure $\mu$ on $\bm K$. Then for all $k\in \N$, $M_k(\bm y) \succeq 0 $. If $\mu$ has support contained in the set $\setdef{\bm x \in K}{g(\bm x) \geq 0}$ for some polynomial $g \in \Rpolm$ then, for all $k \in \N$, $M_k(g \bm y) \succeq 0$.
\end{proposition}

\begin{proof}
Let $\bm y = (y_{\bm \alpha})$ be the moment sequence of a given Borel measure $\mu$ on $\bm K$. Fix $k \in \N$. For any vector $\bm v \in \R^{s(k)}$ (noting that $\bm v$ is canonically associated with a polynomial $v \in \Rpolkm$ in the basis $(\bm x^{\bm \alpha})$):
\begin{align*}
	\bm v^T M_k(\bm y) \bm v &= \sum_{\bm \alpha,\bm \beta \in \N^d_k}  v_{\bm \alpha} y_{\bm \alpha + \bm \beta} v_{\bm \beta}  \\
	& = \sum_{\bm \alpha, \bm\beta \in \N^d_k} v_{\bm\alpha} v_{\bm\beta} \intg{K}{\bm x^{\bm\alpha + \bm\beta}}{\mu}  \\
	& = \intg{K}{\left( \sum_{\bm\alpha \in \N^d_k} v_{\bm\alpha} \bm x^{\bm \alpha} \right)^2}{\mu} \\
	& = \intg{K}{v^2(\bm x)}{\mu} \geq 0 \Mdot 
\end{align*} 
Thus $M_{k}(\bm y) \succeq 0 $. 

Similarly we can prove that the localising matrix $M_k(g \bm y)$ is positive semidefinite when $g$ is a nonnegative polynomial on the support of $\mu$. Indeed for all $\bm v \in \R^{s(k)}$:
\[
	\bm v^T M_k(g \bm y) \bm v = \intg{K}{v^2(\bm x) g(\bm x)}{\mu} \geq 0 \Mcomma
\]
which concludes the proof.
\end{proof}

The following theorem, which is the dual facet of Theorem~\ref{th:putinar}, is the key result for deriving the hierarchy of SDP relaxations for the primal problem \refprog{LP-CF}. It provides a necessary and sufficient condition for a sequence to have a representing measure.
\begin{theorem}[Theorem 3.8 \cite{lasserre10}]
Let $\bm y = (y_{\bm \alpha})_{\bm \alpha \in \N^d} \in \R^{\N^d}$ be a given infinite sequence in $\R$. Let $K \subset \R^d$ satisfy Assumptions~\ref{ass:algebraic}. Then $\bm y$ has a finite Borel representing measure with support contained in $K$ if and only if:
\begin{align*}
    M_k(\bm y) \succeq 0, \quad &\forall k \in \N, \\
    M_k(g_j \bm y) \succeq 0, \quad &\forall j=1,\dots,m , \; \forall k \in \N.
\end{align*}
\label{th:representingmeasure}
\end{theorem}

Using the results above and Assumption~\ref{ass:algebraic}, one can derive a hierarchy of SDPs \cite{lasserre10} which provide a converging sequence of optimal values towards the value of program~\refprog{GMPstdform}:
\leqnomode
\begin{flalign*}
   \label{prog:SDP-GMP}
    \tag*{(GMP$^k$)}
    \hspace{2cm}\left\{
        \begin{aligned}
            & \quad \text{Find } \bm y = (y_{\bm \alpha})_{\bm \alpha \in \N^d_{2k}} \in \R^{s(2k)} \\
            & \quad \text{maximising } y_0 \\
            & \quad \text{subject to:}  \\
            & \hspace{1cm} \begin{aligned}
                &  \forall j \in \Gamma,\quad L_{\bm y}(h_j) \leq \gamma_j \\
                &  M_k(\bm y) \succeq 0 \\
                &  \forall i \in 1,\dots,m, \quad M_{k - \lceil \frac{\text{deg}(g_i)}{2} \rceil }(g_i \bm y) \succeq 0 \Mdot
            \end{aligned}
        \end{aligned}
    \right. &&
\end{flalign*}
\reqnomode
We refer readers to \cite{lasserre10} for the proof of convergence of the hierarchies given by programs \refprog{SDP-GMP} and \refprog{DSDP-GMP}.

\section{\texorpdfstring{Duality between programs (SDP-CF$^k$) and (DSDP-CF$^k$)}{Duality between programs (SDP-CFk) and (DSDP-CFk)}}
\label{sec:appendix_SDP_duality}

As mentioned above, we chose to derive programs \refprog{SDPk-CF} and \refprog{DSDPk-CF} using dual arguments. These programs should therefore be dual to one another, which will immediately provide weak duality. We prove this for completeness. 

\begin{proposition}
The program  {\upshape \refprog{DSDPk-CF}} is the dual formulation of the program {\upshape \refprog{SDPk-CF}}. 
\label{prop:weakduality}
\end{proposition}

\begin{proof}
We start by rewriting $M_k(\bm y)$ as $\sum_{\bm \alpha \in \N^d_k} \bm y_{\bm \alpha} A_{\bm \alpha}$ and $M_{k-1}(g_j \bm y)$ as $\sum_{\bm \alpha \in \N^d_k} y_{\bm \alpha} B^j_{\bm \alpha}$ for $1 \leq j \leq m$ and for appropriate real symmetric matrices $A_{\bm \alpha}$ and $(B^j_{\bm \alpha})_j$. For instance, in the basis $(\bm x^{\bm \alpha})$:
\[
   ( A_{\bm \alpha} )_{\bm s, \bm t} = \bigg( 
    \begin{cases}
        1 \text{ if } \bm s+\bm t = \bm \alpha \\
        0 \text{ otherwise}
    \end{cases}
    \bigg)_{\bm s, \bm t} \Mdot
\]
From $A_{\bm \alpha}$, we also extract $A_{\bm \alpha}^C$ for $C \in \Mc$ in order to rewrite $M_k(\bm y|_C)$ as $\sum_{\alpha \in \N^d_k} \bm y_{\bm \alpha} A_{\bm \alpha}^C$. This amounts to identifying which matrices $(A_{\bm \alpha})$ contribute to a given context $C \in \Mc$. Then \refprog{SDPk-CF} can be rewritten as:
\leqnomode
\begin{flalign*}
    \tag*{(SDP-CF$^{k}$)}
    \hspace{3cm}\left\{
    \begin{aligned}
        & \quad \text{Find } \bm y \in \R^{s(2k)} \\
        & \quad \text{maximising } y_{\bm 0} \\
        & \quad \text{subject to:}  \\
        &  \hspace{1cm} \begin{aligned}
            & \forall C \in \Mc,\; M_k(\bm y^{e,C}) - \sum_{\bm \alpha \in \N^d_k} y_{\bm \alpha} A_{\bm \alpha}^C \succeq 0, \\
            & \forall j=1,\dots,m,\; \sum_{\bm \alpha \in \N^d_k} y_{\bm \alpha} B^j_{\bm \alpha} \succeq 0, \\
            & \sum_{\bm \alpha \in \N^d_k} y_{\bm \alpha} A_{\bm \alpha} \succeq 0 \Mdot
        \end{aligned}
    \end{aligned}
    \right. &&
\end{flalign*}
\reqnomode
Via the Lagrangian method, this is equivalent to:
\[
	\sup_{\bm y \in \R^{s(k)}} \; \inf_{ \substack{(X^C), (Y_j), Z \\ \text{ SDP matrices}}} \; \Lc(\bm y,(X_C),Y,(Z_j)),
\]
with
\begin{align*}
    \Lc(\bm y,(X^C),(Y_j),Z) &= y_{\bm 0} \\
    & + \sum_{C \in \Mc} \Tr(M_k(\bm y^{e,C}) X^C) - \sum_{C \in \Mc} \sum_{\bm \alpha \in \N^d_k} y_{\bm \alpha} \Tr(A_{\bm \alpha}^C X^C) \\
    & + \sum_{j=1}^m \sum_{\bm \alpha \in \N^d_k} y_{\bm \alpha} \Tr(B_{\bm \alpha}^j Y_j) \\
    & + \sum_{\bm \alpha \in \N^d_k} y_{\bm \alpha} \Tr(A_{\bm \alpha} Z) \Mdot
\end{align*}
The dual program corresponds to 
\[
	\inf_{ \substack{(X^C), (Y_j), Z \\ \text{ SDP matrices}}} \; \sup_{\bm y \in \R^{s(k)}} \; \Lc(\bm y,(X^C),(Y_j),Z) \Mdot
\]
We rewrite the Lagrangian as:
\begin{align*}
    \Lc(\bm y,(X_C),Y,(Z_j)) &= \sum_{C \in \Mc} \Tr(M_k(\bm y^{e,C}) X^C) \\
    & + \sum_{\bm \alpha \in \N^d_k} y_{\bm \alpha} \left( \delta_{\bm \alpha, \bm 0}  - \sum_{C \in \Mc} \Tr(A_{\bm \alpha}^C X^C) + \sum_{j=1}^m \Tr(B_{\bm \alpha}^j Y_j) + \Tr(A_{\bm \alpha} Z) \right) \Mdot
\end{align*}
Then the dual program of \refprog{SDPk-CF} reads:
\leqnomode
\begin{flalign*}
    \hspace{2cm}\left\{
    \begin{aligned}
        & \quad \text{Find } \family{X^C}_{C \in \Mc}, \family{Y_j}_{j=1,\dots,m} \text{ and } Z \text{ SDP matrices } \\
        & \quad \text{maximising } \sum_{C \in \Mc} \Tr(M_k(\bm y^{e,C}) X^C) \\
        & \quad \text{subject to:}  \\        
        & \hspace{1cm} \begin{aligned}
            &  \forall \bm \alpha \in \N^d_k, \; \sum_{C \in \Mc} \Tr(A_{\bm \alpha}^C X^C) - \sum_{j=1}^m \Tr(B_{\bm \alpha}^j Y_j) - \Tr(A_{\bm \alpha} Z) = \delta_{\bm \alpha, \bm 0} \Mdot
        \end{aligned}
    \end{aligned}
    \right. &&
\end{flalign*}
\reqnomode
We finally show that the above program exactly corresponds to \refprog{DSDPk-CF}. We start with the objective. For all $C \in \Mc$, with $X^C$ a positive semidefinite matrix:
\begin{align*}
    \Tr(M_k(\bm y^{e,C}) X^C) & = \sum_{\bm \alpha} \sum_{\bm \beta} (M_k(\bm y^{e,C}))_{\bm \alpha \bm \beta} (X^C)_{\bm \beta \bm \alpha} \\
    & = \sum_{\bm \alpha} \sum_{\bm \beta} \bm y^{e,C}_{\bm \alpha + \bm \beta} (X^C)_{\bm \alpha \bm \beta} \\
    & = \sum_{\bm \alpha} \sum_{\bm \beta} \intg{\Oc_C}{\bm x^{\bm \alpha + \bm \beta}}{e_C} (X^C)_{\bm \alpha \bm \beta} \\
    & = \intg{\Oc_C}{ \bm \vrm_k(\bm x)^T X^C \bm \vrm_k(\bm x)}{e_C} \\
    & = \intg{\Oc_C}{ p_C }{e_C},
\end{align*}
with for all $C \in \Mc$, $p_C \in \SOSkm$ a sum-of-squares polynomial via Proposition~\ref{prop:sosdecomposition} and where we used $\bm \vrm_k(\bm x)$ the vectors of monomials of maximal total degree $k$.

Now, to retrieve the constraint, we multiply each side by $\bm x^{\bm \alpha}$ and we sum for all $\bm \alpha$:
\[
    \sum_{C \in \Mc} \Tr(\sum_{\bm \alpha} \bm x^{\bm \alpha} A^C_{\bm \alpha} X^C) - \bm 1 = \sum_{j=1}^m \Tr(\sum_{\bm \alpha} \bm x^{\bm \alpha} B_{\bm \alpha}^j Y_j) + \Tr(\sum_{\bm \alpha} \bm x^{\bm \alpha} A_{\bm \alpha} Z)
\]

Recalling the definition of moment and localising matrices:
\begin{align*}
    & \sum_{\bm \alpha} A^C_{\bm \alpha} \bm x^{\bm \alpha} = \bm \vrm_k(\bm x) \bm \vrm_k(\bm x)^T \\
    & \sum_{\bm \alpha} B_{\bm \alpha}^j \vrm_k(\bm x) \vrm_k(\bm x)^T = g_j(\bm x) \bm \vrm_{k-1}(\bm x) \bm \vrm_{k-1}(\bm x)^T, \; \forall j=1,\dots,m 
\end{align*}

Thus, by Proposition~\ref{prop:sosdecomposition}, for appropriate sum-of-squares polynomials $(\sigma_j)_{j=0,1,\dots,m} \subset \Rpolm_{k-1}$:
\begin{align*}
    & \Tr(\sum_{\bm \alpha} \bm x^{\bm \alpha} A^C_{\bm \alpha} X^C) = p_C \circ \rho_C^X (\bm x) \\
    & \Tr(\sum_{\bm \alpha} \bm x^{\bm \alpha} B_{\bm \alpha}^j Y_j) = g_j(\bm x) \sigma_j(\bm x) \\
    & \Tr(\sum_{\bm \alpha} \bm x^{\bm \alpha} A_{\bm \alpha} Z) = \sigma_0(\bm x)
\end{align*}
This is exactly the constraint in \refprog{DSDPk-CF} with, for convenience, $g_0 = 1$.
\end{proof}
\vfill

%
%

\end{document}